\newcommand{\norma}[1]{{\left\vert\kern-0.25ex\left\vert\kern-0.25ex\left\vert #1
    \right\vert\kern-0.25ex\right\vert\kern-0.25ex\right\vert}}
\newcommand{\teta}{\tilde{\eta}}
\newcommand{\bx}{\mathbf{x}}
\newcommand{\bn}{\mathbf{n}}
\newcommand{\bu}{\mathbf{u}}
\newcommand{\tDiv}{\tilde{\nabla}\!\cdot\!}
\newcommand{\Div}{\nabla\!\cdot\!}
\newcommand{\Curl}{\nabla\!\times\!}
\newcommand{\tbn}[1]{{\left\vert\kern-0.25ex\left\vert\kern-0.25ex\left\vert #1 \right\vert\kern-0.25ex\right\vert\kern-0.25ex\right\vert}}
\newtheorem{remark}{Remark}[section]
\newtheorem{lemma}{Lemma}[section]
\newtheorem{proposition}{Proposition}[section]
\newtheorem{theorem}{Theorem}[section]
\newtheorem{definition}{Definition}[section]
\begin{document}

\title[Equations for small amplitude shallow water waves over small bathymetric variations]{Equations for small amplitude shallow water waves over small bathymetric variations}


\author{Samer Israwi}
\address{\textbf{S.~Israwi:} Department of Mathematics, Faculty of Sciences 1, Lebanese University, Beirut, Lebanon}
\email{s\_israwi83@hotmail.com}

\author{Youssef Khalifeh}
\address{\textbf{Y. Khalifeh} Laboratory of Mathematics, Faculty of Sciences 1, Lebanese University, Beirut, Lebanon}
\email{khalifeyoussef78@gmail.com}

\author{Dimitrios Mitsotakis}
\address{\textbf{D.~Mitsotakis:} Victoria University of Wellington, School of Mathematics and Statistics, PO Box 600, Wellington 6140, New Zealand}
\email{dimitrios.mitsotakis@vuw.ac.nz}



\subjclass[2000]{35Q35, 76B15, 35G45}

\date{\today}


\keywords{Boussinesq systems, variable bottom topography, energy conservation}

\begin{abstract}

A generalized version of the $abcd$-Boussinesq class of systems is derived to accommodate variable bottom topography in two-dimensional space. This extension allows for the conservation of suitable energy functionals in some cases and enables the description of water waves in closed basins with well-justified slip-wall boundary conditions. The derived systems possess a form that ensures their solutions adhere to important principles of physics and mathematics. By demonstrating their consistency with the Euler equations and estimating their approximation error, we establish the validity of these new systems. Their derivation is based on the assumption of small bathymetric variations. With practical applications in mind, we assess the effectiveness of some of these new systems through comparisons with standard benchmarks. The results indicate that the assumptions made during the derivation are not overly restrictive. The applications of the new systems encompass a wide range of scenarios, including the study of tsunamis, tidal waves, and waves in ports and lakes.

\end{abstract}

\maketitle

\section{Introduction}

The propagation of water waves is mathematically described by partial differential equations derived by Leonard Euler (1707-1783) in 1757 \cite{Euler1757} for the study of incompressible flow of an ideal fluid. These equations are known as the (incompressible) Euler equations. To present the Euler equations, we first define the horizontal and vertical coordinates of three-dimensional space as $\mathbf{x}=(x,y)\in \mathbb{R}^2$ and $z$, respectively. The undisturbed water level at rest is considered to be at $z=0$. The velocity of the fluid is denoted as $(\mathbf{u}(\mathbf{x},z,t),w(\mathbf{x},z,t))=(u(\mathbf{x},z,t),v(\mathbf{x},z,t),w(\mathbf{x},z,t))$, separating the vertical component from its horizontal components. The free surface elevation above its undisturbed level is represented by $\eta(\mathbf{x},t)$, and the depth of the ocean floor is denoted as $D(\mathbf{x},t)$. (Note that we consider time-dependent bottom topography to allow for the modeling of water waves generated by the motion of the ocean floor.) The horizontal gradient is denoted by $\nabla = (\partial_x,\partial_y)$, while $t$ represents time. The Euler equations consist of equations for the conservation of mass and momentum, which, for $-D(\mathbf{x},t)<z<\eta(\mathbf{x},t)$, can be written as:
\begin{align}
&\Div\bu+w_z=0\ , \label{eq:mass} \\
&\bu_t+(\bu\cdot\nabla)\bu+w\bu_z+\frac{1}{\rho}\nabla p=0, \ , \label{eq:momentu}\\
&w_t+(\bu\cdot\nabla) w+ww_z+\frac{1}{\rho} p_z+g=0\ . \label{eq:momentv}
\end{align}
Since our aim is to model water waves in oceans bounded above by a free surface and below by the ocean floor, Euler's equations require the following boundary conditions at the free surface 
\begin{equation}
\eta_t+\bu\cdot\nabla\eta-w=0,\quad \text{on $z=\eta(\bx,t)$}\ ,
\end{equation}
and at the bottom
\begin{equation}
D_t+\bu\cdot\nabla(z+D(\bx,t))+w=0\ , \quad\text{for $z=-D(\bx,t)$}\ .
\end{equation}
Another important ingredient in the theory of water waves is the irrotationality condition
\begin{equation}\label{eq:irrotational}
\Curl (\bu,w)=\begin{pmatrix} w_y-v_z\\ u_z-w_x\\ v_x-u_y \end{pmatrix}=0\ ,
\end{equation}
which is unavoidable when potential flow is assumed.
While the pressure inside the fluid volume is an unknown function, the pressure at the free surface is a given constant \begin{equation}\label{eq:pressbc}
p=p_{\text{atm}}, \quad \text{for $z=\eta(\bx,t)$}\ .
\end{equation}

The study of these particular equations presents a great difficulty due to the fact that the fluid domain $\Omega_t$, which includes the free surface of the water and the impermeable bottom (and possibly solid wall boundaries), is practically unknown since it is bounded by the unknown free-surface $z=\eta(\mathbf{x},t)$. Furthermore, the domain is not stationary, adding further complexity to the situation. As a result, scientists have derived approximate equations based on simplifying assumptions. The development of such approximations began after John Scott Russell (1808--1882) discovered solitary waves and published his {\em report on Waves} in 1844, \cite{Russell1844}, and this work has continued to the present day. Examples of these approximations are the Boussinesq systems, which describe long waves of small amplitude. The first Boussinesq system was derived by J. Boussinesq himself \cite{Bous1871,Bous1872}, and subsequent improvements, also referred to as Boussinesq systems, have been derived. It is important to note that there are only a few water wave systems that are mathematically and physically justified. For instance, the Euler equations have been proven to be well-posed (in the Hadamard's sense) only in unbounded domains \cite{wu1997,wu1999,L2005}, while among all the Boussinesq systems that have been derived, only the system proposed in \cite{IKKM2021} has been proven to be well-posed in a bounded domain with slip-wall boundary conditions.

Several Boussinesq systems have been extensively used in the literature to describe nonlinear and dispersive waves with variable bottom topography. One of the notable Boussinesq systems is the Peregrine system \cite{Pere1967}:
\begin{equation}
\label{eq:Peregrin}
\begin{aligned}
&\eta_t+\nabla\cdot[(D+\eta)\bu]=0\ , \\
&{\bf u}_t +g\nabla\eta+(\bu\cdot \nabla)\bu-\frac{1}{2}D\nabla(\nabla\cdot(D\bu_t))+\frac{1}{6}D^2\nabla(\nabla\cdot\bu_t)=0\ ,
\end{aligned}
\qquad \text{(Peregrine)}
\end{equation}
and the Nwogu (or extended Boussinesq) system \cite{Nwogu93}
\begin{equation}\label{eq:Nwogu}
\begin{aligned}
& \eta_t+\Div((D+\eta)\bu)+\Div[\bar{a}D^2\nabla(\Div(D\bu))+\bar{b}D^3\nabla(\Div\bu)]=0\ ,\\
& \bu_t+g\nabla\eta+(\bu\cdot\nabla)\bu-[\bar{c}D\nabla(\Div(D\bu_t))+\bar{d}D^2\nabla(\Div\bu_t)]=0\ ,
\end{aligned}\qquad \text{(Nwogu)}
\end{equation}
with $\bar{a}=\theta-1/2$, $\bar{b}=1/2[(\theta-1)^2-1/3]$, $\bar{c}=1-\theta$, $\bar{d}=1/2(1-\theta)^2$, and $\theta=1/5$. (The parameter $\theta$ determines the depth in which the horizontal velocity $\bu$ is computed, cf. e.g. (\ref{eq:thetad})). Although in some cases their Cauchy problems have been proved well-posed, \cite{DuchIsr2018,mits2009,BLIG2022}, they do not preserve any meaningful approximation of the total energy of the Euler equations. While the particular property is not restrictive, it is rather desirable from physical as well as numerical point of view. Specifically, we can ensure physical validity for long periods of time when the solution of a system preserves an energy functional \cite{Feng2010}. Similarly, systems of BBM-BBM-type \cite{mits2009} or the systems with flat bottom topography of \cite{BCL2005} fail at describing water waves in a basin with slip-wall boundary conditions \cite{dms2009,DMS2010,DMS2007}. Some other Boussinesq systems even fail in having traveling waves of finite energy, proving that asymptotic justification does not imply physical relevance, \cite{BDM2007i,BDM2008ii}. For all these reasons in \cite{KMS2020, IKKM2021} we derived and analyzed a new Boussinesq system of BBM-BBM-type which can be used in a straightforward manner in bounded domains with slip-wall boundary conditions. Moreover, its solutions preserve the same energy functional as its non-dispersive counterpart, i.e. the shallow water-waves (Saint-Venant) equations \cite{Whitham2011} making the system attractive from numerical and physical point of view. 

In general, Boussinesq systems are extensions of equations that describe the propagation of water waves in one direction, such as the Korteweg-de Vries (KdV) equation. Various versions of the KdV equation have been derived to account for variable bottom topography \cite{Isrtal13,Israwii10,DurIsrawi12}. On the other hand, such models cannot describe wave reflections on walls or obstacles. Moreover, high-order Boussinesq equations, including the Serre-Green-Naghdi equations \cite{Serre, GN1976, Lannes13}, have only been proven to be well-posed in unbounded domains \cite{Israwi11}. Note that the Serre-Green-Naghdi equations and their optimized counterpart of \cite{CDM2017} preserve an energy functional.

In this article, we start by presenting in Section \ref{sec:derivation} the asymptotic derivation of a new class of Boussinesq systems that generalizes the so-called $abcd$-Boussinesq systems of \cite{BCL2005,BCS2002} in the case of variable bottom topography. We also establish their consistency with the Euler equations along the lines of \cite{BCL2005} in Section \ref{consistwithEulerr}. These systems also generalize the systems of \cite{Chen03} in two-dimensional domains, and their derivation is based on the same assumption of small bottom variations. Although we do not provide any well-posedness results, these new systems appear to have analogous theoretical properties to the system of \cite{BCL2005}, and they can further be used with slip-wall boundary conditions. The well-posedness of the BBM-BBM system was proved in \cite{IKKM2021}, while the well-posedness of other Boussinesq systems of the same family, such as the Bona-Smith systems, is very similar. The new system that corresponds to the Nwogu system can be applied with accurate slip-wall boundary conditions, as demonstrated in Section \ref{sec:bcs}, and is validated against laboratory data in Section \ref{sec:valid}. For the Nwogu system as well as for its regularized counterpart, we have justified this set of boundary conditions in one-dimensional domains in \cite{mm2023}. Taking into account the results of \cite{IKKM2021,KMS2020}, we conclude that the assumption of small bottom variations is not restrictive, and the necessary nonlinear and dispersive behaviors of the Boussinesq systems are retained. This was also demonstrated in \cite{MS1991}. For the sake of completeness, we present a variational derivation of the new energy-conserving $abcd$-Boussinesq systems in Section \ref{sec:energy}. The new systems have similar energy conservation properties to the $abcd$-Boussinesq systems of \cite{BCL2005,BCS2002}, and thus we generalize them in the case of variable bottom topography.

\section{Derivation of the new Boussinesq equations}\label{sec:derivation}

Consider the depth function $D(\bx,t)=D_0+D_b(\bx)+\zeta(\bx,t)$, where $D_0$ is a characteristic (mean) depth and $D_b$ is smooth and varies gently in $\mathbb{R}^2$. The displacement of the bottom's moving part, denoted by $\zeta$, is small compared to the depth and does not exceed the amplitude of the free surface elevation. If $\lambda_0$ is a typical wavelength, $a_0$ a typical wave height, and $d_0$ a typical order of bottom topography variations, then we consider the (scaled) non-dimensional independent variables 
\begin{equation}\label{eq:indnondim}
\tilde{\bx}=\frac{\bx}{\lambda_0},\quad \tilde{z}=\frac{z}{D_0}, \quad \tilde{t}=\frac{c_0}{\lambda_0}t\ ,
\end{equation}
and also the non-dimensional dependent variables
\begin{equation}\label{eq:depnondim}
\tilde{\bu}=\frac{D_0}{a_0c_0}\bu, \quad \tilde{w}=\frac{\lambda_0}{a_0c_0}w,\quad \tilde{\eta}=\frac{\eta}{a_0},\quad \tilde{D}_b=\frac{D_b}{d_0}\ ,\quad \tilde{p}=\frac{p}{\rho g D_0}\ , \quad \tilde{\zeta}=\frac{\zeta}{a_0}\ ,
\end{equation}
where $c_0=\sqrt{gD_0}$ is the linear speed of propagation. Denoting as usual
$$\varepsilon = \frac{a_0}{D_0},\quad \sigma=\frac{D_0}{\lambda_0}, \quad \beta=\frac{d_0}{D_0}\ ,$$
and assuming that $0<\varepsilon\approx \sigma^2\lesssim \beta\ll 1$ all the aforementioned Boussinesq systems are asymptotically equivalent, in the sense that all of them occur from the Euler equations after appropriate scaling and discarding high-order terms with the same order in $\varepsilon$ and $\sigma^2$. In the new variable the scaled depth will be  $\tilde{D}=1+\beta \tilde{D}_b+\varepsilon\tilde{\zeta}$. Note that $\tilde{D}_{\tilde{t}}=O(\varepsilon)$ and thus we assume a slowly varying bottom. The requirement $\beta\ll 1$ is used to describe small bottom variations compared to the wavelength.

Using the new variables, we write Euler's equations (\ref{eq:mass})--(\ref{eq:momentv}) in non-dimensional form:
\begin{align}
& \tilde{\nabla} \cdot \tilde{\bu}+\tilde{w}_{\tilde{z}}=0\ , \label{eq:eul1}\\
&\varepsilon \tilde{\bu}_{\tilde{t}}+\varepsilon^2 [(\tilde{\bu}\cdot\tilde{\nabla})\tilde{\bu}+\tilde{w}\tilde{\bu}_{\tilde{z}}]+\tilde{\nabla}\tilde{p}=0\ ,  \label{eq:eul2}\\
&\varepsilon\sigma^2 \tilde{w}_{\tilde{t}}+\varepsilon^2\sigma^2 [\tilde{\bu}\cdot \tilde{\nabla} \tilde{w}+\tilde{w}\tilde{w}_{\tilde{z}}]+\tilde{p}_{\tilde{z}}=-1\ ,  \label{eq:eul3}
\end{align}
for $-\tilde{D}<\tilde{z}<\varepsilon\tilde{\eta}$. The irrotationality condition is also transformed to
\begin{align}
&\tilde{\nabla}\times \tilde{\bu}=0\ , \label{eq:eul4}\\
&\bu_z-\sigma^2\tilde{\nabla}\tilde{w}=0\ ,   \label{eq:eul5}
\end{align}
for $-\tilde{D}<\tilde{z}<\varepsilon\tilde{\eta}$. The boundary conditions on the free surface and the bottom are written as
\begin{align}
& \teta_t+\varepsilon(\tilde{\bu}\cdot\tilde{\nabla}\teta)-\tilde{w}=0,\quad \tilde{p}=\frac{p_{\text{atm}}}{\rho g D_0}\quad \text{on}\quad \tilde{z}=\varepsilon\teta\ ,  \label{eq:eul6}\\
& \tilde{\zeta}_{\tilde{t}}+ \tilde{\bu}\cdot \tilde{\nabla}\tilde{D}+\tilde{w}=0\quad \text{on}\quad\tilde{z}=-\tilde{D}\ .  \label{eq:eul7}
\end{align}

First, integrate the mass equation (\ref{eq:eul1}) between $-\tilde{D}$ and $\varepsilon\tilde{\eta}$ to obtain
\begin{equation}
\tilde{w}(\varepsilon\teta)-\tilde{w}(-\tilde{D})=-\int_{-\tilde{D}}^{\varepsilon\teta}\tilde{\nabla}\cdot\tilde{\bu}~d\tilde{z}\ .
\end{equation}
Using the boundary conditions (\ref{eq:eul6}) and (\ref{eq:eul7}) we obtain the equation
\begin{equation}\label{eq:massper}
\teta_{\tilde{t}}+\tilde{\nabla}\cdot[(\tilde{D}+\varepsilon\tilde{\eta})\tilde{\bu}_a]+\tilde{\zeta}_t=0\ ,
\end{equation}
where 
\begin{equation}\label{eq:dpav}
\tilde{\bu}_a(\tilde{\bx},\tilde{t})=\frac{1}{\tilde{D}+\varepsilon\teta}\int_{-\tilde{D}}^{\varepsilon\teta}\tilde{\bu}~d\tilde{z}\ ,
\end{equation}
denotes the depth-averaged horizontal velocity of the fluid. Equation (\ref{eq:massper}) is exact and it is the mass equation of the hydrostatic shallow water wave equations and of Peregrine's original system.

Integrating (\ref{eq:eul1}) from $-\tilde{D}$ to $\tilde{z}$, and using (\ref{eq:eul7}) we have
\begin{equation}\label{eq:eul8}
\tilde{w}=-\tilde{\bu}\cdot \tilde{\nabla}\tilde{D}-\int_{-\tilde{D}}^{\tilde{z}}\tilde{\nabla}\cdot\tilde{\bu}-\tilde{\zeta}_{\tilde{t}}\ .
\end{equation}
After integration of (\ref{eq:eul5}) over $(0,\tilde{z})$, we have
$
\tilde{\bu}(\tilde{\bx},\tilde{z},\tilde{t})=\tilde{\bu}_0(\tilde{\bx},\tilde{t})+O(\sigma^2)\ ,
$
where $\tilde{\bu}_0(\tilde{\bx},\tilde{t})\doteq \tilde{\bu}(\tilde{\bx},0,\tilde{t})$ denotes the horizontal velocity at the bottom. Similarly, integration of (\ref{eq:eul5}) over $(-\tilde{D},\tilde{z})$ yields
$ \tilde{\bu}(\tilde{\bx},\tilde{z},\tilde{t})=\tilde{\bu}_b(\tilde{\bx},\tilde{t})+O(\sigma^2)$, where $\tilde{\bu}_b(\tilde{\bx},\tilde{t})\doteq \tilde{\bu}(\tilde{\bx},-\tilde{D},\tilde{t})$. Thus, we have 
\begin{equation}\label{eq:eul9}
\tilde{\bu}(\tilde{\bx},\tilde{z},\tilde{t})=\tilde{\bu}_0(\tilde{\bx},\tilde{t})+O(\sigma^2)=\tilde{\bu}_b(\tilde{\bx},\tilde{t})+O(\sigma^2)\ .
\end{equation}
Substitution of (\ref{eq:eul8}) into the irrotationality condition (\ref{eq:eul5}) and using (\ref{eq:eul9}) yields
\begin{equation}\label{eq:eul10}
\tilde{\bu}_{\tilde{z}}=-\sigma^2(\tilde{z}+1) \tilde{\nabla}(\tilde{\nabla}\cdot \tilde{\bu}_0)-\sigma^2\tilde{\nabla}\tilde{\zeta}_{\tilde{t}}+O(\sigma^4,\varepsilon\sigma^2,\beta\sigma^2)\ .
\end{equation}
Integration of (\ref{eq:eul10}) from $0$ to $\tilde{z}$ implies
\begin{equation}\label{eq:eul11}
\tilde{\bu}=\tilde{\bu}_0-\sigma^2(\tilde{z}+\frac{\tilde{z}^2}{2})\tilde{\nabla}(\tilde{\nabla}\cdot \tilde{\bu}_0)-\sigma^2\tilde{z}\nabla\tilde{\zeta}_{\tilde{t}}+O(\sigma^4,\varepsilon\sigma^2,\beta\sigma^2)\ .
\end{equation}
Equation (\ref{eq:eul8}) using (\ref{eq:eul9}) becomes
\begin{equation}\label{eq:eul12}
\tilde{w}=-\tilde{\nabla}\cdot(\tilde{D}\tilde{\bu}_0)-\tilde{z}\tilde{\nabla}\cdot \tilde{\bu}_0-\tilde{\zeta}_{\tilde{t}}+O(\sigma^2)\ ,
\end{equation}
and differentiation with respect to $t$ yields
\begin{equation}\label{eq:eul13}
\tilde{w}_{\tilde{t}}=-\tilde{\nabla}\cdot(\tilde{D}\tilde{\bu}_0)_{\tilde{t}}-\tilde{z}\tilde{\nabla}\cdot {\tilde{\bu}_0}_{\tilde{t}}-\tilde{\zeta}_{\tilde{t}\tilde{t}}+O(\sigma^2)\ .
\end{equation}
Setting $\tilde{P}=\tilde{p}-p_{\text{atm}}/\rho g D_0$ (so as $\tilde{\nabla} \tilde{P}=\tilde{\nabla} \tilde{p}$ and $\tilde{P}(\varepsilon\teta)=0$) and integrating (\ref{eq:eul3}) from $\tilde{z}$ to $\varepsilon\tilde{\eta}$ we obtain
\begin{equation}\label{eq:eul14}
\tilde{P}=\varepsilon\sigma^2(\tilde{z}+\frac{\tilde{z}^2}{2})\tilde{\nabla}\cdot{\tilde{\bu}_0}_{\tilde{t}}+\varepsilon\sigma^2\tilde{z}\tilde{\zeta}_{\tilde{t}\tilde{t}}+\varepsilon\tilde{\eta}-\tilde{z}+O(\varepsilon\sigma^4,\varepsilon^2\sigma^2,\varepsilon\beta\sigma^2)\ .
\end{equation}
Substitution of (\ref{eq:eul11}), (\ref{eq:eul12}) and (\ref{eq:eul14}) into (\ref{eq:eul2}) leads to the approximation of momentum conservation
\begin{equation}\label{eq:eul15}
{\tilde{\bu}_0}_{\tilde{t}}+\tilde{\nabla}\tilde{\eta}+\varepsilon(\tilde{\bu}_0\cdot\tilde{\nabla})\tilde{\bu}_0=O(\sigma^4,\varepsilon\sigma^2,\beta\sigma^2)\ .
\end{equation}
Equation (\ref{eq:eul11}) using (\ref{eq:dpav}) becomes
\begin{equation}\label{eq:eul17}
\tilde{\bu}_0=\tilde{\bu}_a-\frac{\sigma^2}{3}\tilde{\nabla}(\tilde{\nabla}\cdot {\tilde{\bu}_a})-\frac{\sigma^2}{2}\tilde{\nabla}\tilde{\zeta}_{\tilde{t}}+O(\sigma^4,\varepsilon\sigma^2,\beta\sigma^2)\ .
\end{equation}
Subsequently, equation (\ref{eq:eul15}) yields the momentum equation
\begin{equation}\label{eq:eul18}
{\tilde{\bu}_a}_{\tilde{t}}+\tilde{\nabla}\tilde{\eta}+\varepsilon({\tilde{\bu}_a}\cdot\tilde{\nabla}){\tilde{\bu}_a}-\frac{\sigma^2}{3}\tilde{\nabla}(\tilde{\nabla}\cdot{\tilde{\bu}_a}_{\tilde{t}})-\frac{\sigma^2}{2}\tilde{\nabla}\tilde{\zeta}_{\tilde{t}\tilde{t}}=O(\sigma^4,\varepsilon\sigma^2,\beta\sigma^2)\ .
\end{equation}
Since $\tilde{\bu}=\tilde{\bu}_a+O(\sigma^2)$, it is implied that $\tilde{\nabla}\times \tilde{\bu}_a=O(\sigma^2)$, which yields that $({\tilde{\bu}_a}\cdot\tilde{\nabla}){\tilde{\bu}_a}=\frac{1}{2}\tilde{\nabla}|\tilde{\bu}_a|^2+O(\sigma^2)$. Therefore, we can further simplify (\ref{eq:eul18}) into 
\begin{equation}\label{eq:eul19}
{\tilde{\bu}_a}_{\tilde{t}}+\tilde{\nabla}\tilde{\eta}+\frac{\varepsilon}{2}\tilde{\nabla}|\tilde{\bu}_a|^2-\frac{\sigma^2}{3}\tilde{\nabla}(\tilde{\nabla}\cdot{\tilde{\bu}_a}_{\tilde{t}})-\frac{\sigma^2}{2}\tilde{\nabla}\tilde{\zeta}_{\tilde{t}\tilde{t}}=O(\sigma^4,\varepsilon\sigma^2,\beta\sigma^2)\ .
\end{equation}
Note that equations (\ref{eq:massper})-(\ref{eq:eul19}) have been also studied in \cite{KMS2020}. Evaluating the horizontal velocity (\ref{eq:eul11}) at depth \begin{equation}\label{eq:thetad}
\tilde{z}_\theta=-\tilde{D}+\theta(\varepsilon\tilde{\eta}+\tilde{D}) \quad\text{for} \quad 0\leq \theta\leq 1\ ,
\end{equation}
where $\sigma^2\tilde{z}_\theta=(\theta-1)\sigma^2+O(\varepsilon\sigma^2,\beta\sigma^2)$, and using (\ref{eq:eul17}) we obtain
\begin{equation}\label{eq:eul20}
\tilde{\bu}_\theta=\tilde{\bu}_a-\frac{\sigma^2}{2}\left(\theta^2-\frac{1}{3}\right)\tilde{\nabla}(\tilde{\nabla}\cdot\tilde{\bu}_a)-\sigma^2\left(\theta-\frac{1}{2}\right)\tilde{\nabla}\tilde{\zeta}_{\tilde{t}}+O(\sigma^4,\varepsilon\sigma^2,\beta\sigma^2)\ ,
\end{equation}
and equivalently
\begin{equation}\label{eq:eul21}
\tilde{\bu}_a=\tilde{\bu}_\theta+\frac{\sigma^2}{2}\left(\theta^2-\frac{1}{3}\right)\tilde{\nabla}(\tilde{\nabla}\cdot\tilde{\bu}_\theta)+\sigma^2\left(\theta-\frac{1}{2}\right)\tilde{\nabla}\tilde{\zeta}_{\tilde{t}}+O(\sigma^4,\varepsilon\sigma^2,\beta\sigma^2)\ .
\end{equation}
Substituting (\ref{eq:eul21}) into (\ref{eq:massper}) and (\ref{eq:eul19}) we obtain the system
\begin{align}
&\teta_{\tilde{t}}+\tilde{\nabla}\cdot[(\tilde{D}+\varepsilon\tilde{\eta})\tilde{\bu}_\theta]+\frac{\sigma^2}{2}\left(\theta^2-\frac{1}{3}\right)\tilde{\nabla}\cdot\tilde{\nabla}(\tilde{\nabla}\cdot\tilde{\bu}_\theta)+\sigma^2\left(\theta-\frac{1}{2}\right)\tilde{\nabla}\cdot\tilde{\nabla}\tilde{\zeta}_{\tilde{t}}+\tilde{\zeta}_t=O(\sigma^4,\varepsilon\sigma^2,\beta\sigma^2)\ , \label{eq:eul22}\\
&{\tilde{\bu}_\theta}_{\tilde{t}}+\tilde{\nabla}\tilde{\eta}+\frac{\varepsilon}{2}\tilde{\nabla}|\tilde{\bu}_\theta|^2+\frac{\sigma^2}{2}(\theta^2-1)\tilde{\nabla}(\tilde{\nabla}\cdot{\tilde{\bu}_\theta}_{\tilde{t}})+\sigma^2(\theta-1)\tilde{\nabla}\tilde{\zeta}_{\tilde{t}\tilde{t}}=O(\sigma^4,\varepsilon\sigma^2,\beta\sigma^2)\ . \label{eq:eul23}
\end{align}
Observe that from (\ref{eq:eul22}) and (\ref{eq:eul23}) we obtain
\begin{equation}\label{eq:eul24}
\tilde{\nabla}\cdot \tilde{\bu}_\theta=-\teta_{\tilde{t}}-\tilde{\zeta}_{\tilde{t}}+O(\varepsilon,\beta,\sigma^2)\ ,
\end{equation}
and
\begin{equation}\label{eq:eul25}
{\tilde{\bu}_{\theta}}_{\tilde{t}}=-\tilde{\nabla}\tilde{\eta}+O(\varepsilon,\sigma^2)\ . 
\end{equation}
Using the classical BBM-trick \cite{BBM1972,Pere1966} with (\ref{eq:eul24}), (\ref{eq:eul25}) and taking arbitrary $\nu,\mu\in \mathbb{R}$ we write 
\begin{equation}\label{eq:rel1}
\tilde{\nabla}\cdot\tilde{\nabla}(\tilde{\nabla}\cdot\tilde{\bu}_\theta)=-\tilde{\nabla}\cdot\tilde{\nabla}~\teta_{\tilde{t}}-\tilde{\nabla}\cdot\tilde{\nabla}~{\tilde{\zeta}}_{\tilde{t}}+O(\varepsilon,\beta,\sigma^2)\ ,
\end{equation} 
and 
\begin{equation}\label{eq:rel2}
\tilde{\nabla}(\tilde{\nabla}\cdot{\tilde{\bu}_\theta}_{\tilde{t}})=-\tilde{\nabla}(\tilde{\nabla}\cdot \tilde{\nabla}\tilde{\eta})+O(\varepsilon,\sigma^2)\ .
\end{equation} 
Substituting relations (\ref{eq:rel1})--(\ref{eq:rel2}) into the system (\ref{eq:eul22})--(\ref{eq:eul23}) we obtain the general $abcd$-Boussinesq system
\begin{align}
& \teta_{\tilde{t}}+\tilde{\nabla}\cdot[(\tilde{D}+\varepsilon\tilde{\eta})\tilde{\bu}_\theta]+\sigma^2 \tilde{\nabla}\cdot[a\tilde{\nabla}(\tilde{\nabla}\cdot\tilde{\bu}_\theta)-b\tilde{\nabla}\teta_{\tilde{t}}]+\tilde{a}\sigma^2 \tilde{\nabla}\cdot\tilde{\nabla}\tilde{\zeta}_{\tilde{t}}+\tilde{\zeta}_{\tilde{t}}=O(\sigma^4,\varepsilon\sigma^2,\beta\sigma^2)\ , \label{eq:eul26}\\
& {\tilde{\bu}_\theta}_{\tilde{t}}+\tilde{\nabla}\tilde{\eta}+\frac{\varepsilon}{2}\tilde{\nabla}|\tilde{\bu}_\theta|^2+\sigma^2 \tilde{\nabla}[ c \tilde{\nabla}\cdot \tilde{\nabla}\teta-d\tilde{\nabla}\cdot{\tilde{\bu}_\theta}_{\tilde{t}}]+\sigma^2\tilde{c}\tilde{\nabla}\tilde{\zeta}_{\tilde{t}\tilde{t}}=O(\sigma^4,\varepsilon\sigma^2,\beta\sigma^2)\ .\label{eq:eul27}
\end{align}
where $\tilde{a}=\mu(\theta-1/2)-(1-\mu)(1/2[(\theta-1)^2-1/3])$, $\tilde{c}=(\theta-1)$ and $a,b,c,d$ as in (\ref{eq:abcdcoef2}).

Note that 
\begin{align}
& \tilde{\nabla}(\tilde{\nabla}\cdot\tilde{\bu}_\theta)=\tilde{\nabla}(\tilde{D}^3\tilde{\nabla}\cdot\tilde{\bu}_\theta)+O(\varepsilon,\beta)\ ,\\
&\tilde{\nabla}\teta_{\tilde{t}}=\tilde{D}^2\tilde{\nabla}\teta_{\tilde{t}}+O(\varepsilon,\beta)\ , \\
&\tilde{\nabla}\cdot \tilde{\nabla}\teta=\tilde{\nabla}\cdot (\tilde{D}^2 \tilde{\nabla}\teta)+O(\varepsilon,\beta)\ ,\\
&\tilde{\nabla}\cdot{\tilde{\bu}_\theta}_{\tilde{t}}=\tilde{\nabla}\cdot (\tilde{D}^2{\tilde{\bu}_\theta}_{\tilde{t}})+O(\varepsilon,\beta)\ ,\\
&\tilde{\nabla}\cdot\tilde{\nabla}\tilde{\zeta}_{\tilde{t}}=\tilde{\nabla}\cdot(\tilde{D}^2\tilde{\nabla}\tilde{\zeta}_{\tilde{t}})+O(\varepsilon,\beta)\ ,\\
&\tilde{\nabla}\tilde{\zeta}_{\tilde{t}\tilde{t}}=\tilde{D}\tilde{\nabla}\tilde{\zeta}_{\tilde{t}\tilde{t}}+O(\varepsilon,\beta)\ .
\end{align}
Using the previous approximations we write the general $abcd$-Boussinesq system as
\begin{align}
& \teta_{\tilde{t}}+\tilde{\nabla}\cdot[(\tilde{D}+\varepsilon\tilde{\eta})\tilde{\bu}_\theta]+\sigma^2 \tilde{\nabla}\cdot[a\tilde{\nabla}(\tilde{D}^3\tilde{\nabla}\cdot\tilde{\bu}_\theta)-b\tilde{D}^2\tilde{\nabla}\teta_{\tilde{t}}]+\tilde{a}\sigma^2 \tilde{\nabla}\cdot(\tilde{D}^2\tilde{\nabla}\tilde{\zeta}_{\tilde{t}})+\tilde{\zeta}_{\tilde{t}}=O(\sigma^4,\varepsilon\sigma^2,\beta\sigma^2)\ , \label{eq:eul28}\\
& {\tilde{\bu}_\theta}_{\tilde{t}}+\tilde{\nabla}\tilde{\eta}+\frac{\varepsilon}{2}\tilde{\nabla}|\tilde{\bu}_\theta|^2+\sigma^2 \tilde{\nabla} [ c \tilde{\nabla}\cdot (\tilde{D}^2 \tilde{\nabla}\teta)-d\tilde{\nabla}\cdot (\tilde{D}^2{\tilde{\bu}_\theta}_{\tilde{t}})]+\sigma^2\tilde{c}\tilde{D}\tilde{\nabla}\tilde{\zeta}_{\tilde{t}\tilde{t}}=O(\sigma^4,\varepsilon\sigma^2,\beta\sigma^2)\ .\label{eq:eul29}
\end{align}
The reason for considering such approximations is to ensure that  certain combinations of the parameters $a,b,c,d$ lead to systems of significant interest of \cite{BCS2002} that  preserve meaningful approximations of the total energy even with variable bottom topography. 

Discarding the high-order terms in (\ref{eq:eul28})--(\ref{eq:eul29}) we write the equations (\ref{eq:eul28})--(\ref{eq:eul29}) in dimensional variables as 
\begin{equation}\label{eq:Nwogunabcdd2}
\begin{aligned}
& \eta_t+\Div[(D+\eta)\bu]+\Div\left\{a\nabla(D^3\Div\bu)-bD^2\nabla\eta_t\right\}=-\tilde{a}\Div(D^2\nabla\zeta_t)-\zeta_t\ ,\\
& \bu_t+g\nabla\eta+\tfrac{1}{2}\nabla|\bu|^2 +\nabla\left\{cg\Div(D^2\nabla\eta)-d\Div(D^2\bu_t)\right\}=-\tilde{c}D\nabla\zeta_{tt}\ ,
\end{aligned}
\end{equation}
where
\begin{equation}\label{eq:abcdcoef2}
\begin{aligned}
&a=\frac{1}{2}\left(\theta^2-\frac{1}{3}\right)\mu,\quad b=\frac{1}{2}\left(\theta^2-\frac{1}{3}\right)(1-\mu), \quad c=\frac{1}{2}(1-\theta^2)\nu, \quad d= \frac{1}{2}(1-\theta^2)(1-\nu)\ , \\
&\tilde{a}=\mu\left(\theta-\frac{1}{2}\right)-(1-\mu)\left(\frac{1}{2}\left[(\theta-1)^2-\frac{1}{3}\right]\right), \quad \tilde{c}=(\theta-1)\ ,
\end{aligned}
\end{equation}
for fixed $\mu,\nu\in \mathbb{R}$ and $0\leq \theta\leq 1$. 
The first equation corresponds to an approximation of the mass conservation law, while the second equation corresponds to an approximation of the momentum conservation law.
Note that the system (\ref{eq:Nwogunabcdd2}) has been derived in such a way that the $\Curl\bu$ is preserved for all times $t\geq 0$ for stationary bottom topography, and thus the irrotationality of the water waves is respected exactly. Moreover, we observe that in most cases the presence of the regularization operators $I-b\Div(D^2\nabla\bullet)$ and $I-d\nabla\Div(D^2\bullet)$ requires for well-posedness that $b,d\geq 0$ \cite{DM2008}. As we shall see later, conservation of energy requires additionally $b=d\geq 0$.

Water waves are dispersive by their nature. For modeling purposes we employ the dispersion relation of the linearized equations and with flat bottom topography as a measure of accuracy. Specifically, to study the dispersion characteristics of the system  (\ref{eq:Nwogunabcdd2}) we first consider the expansion of the linear dispersion relation for general periodic solutions of the Euler equations and of the form $e^{i(kx-\omega t)}$, which is
\begin{equation}\label{eq:eulerdisprel}
\frac{c^2_{\text{Euler}}}{gD}=\frac{\tanh(Dk)}{Dk}=1-\frac{1}{3}(Dk)^2+\frac{2}{15}(Dk)^4+O\left((Dk)\right)^6\ ,
\end{equation}
where $k$ is the wavenumber and $\omega$ the frequency of the wave. Recall that the period of a linear wave is defined as $T=2\pi/\omega$, the phase speed is $\omega/k$, the linear  speed of propagation is $\sqrt{gD}$ and the wavelength is $\lambda=2\pi/k$. The system (\ref{eq:Nwogunabcdd2}) with flat bottom $D$ has linear dispersion relation given by the formula
 \begin{equation}\label{eq:abcdisperel}
\begin{aligned}
\frac{c^2_{abcd}}{gD}&=\frac{(1-a(Dk)^2)(1-c(Dk)^2)}{(1+b(Dk)^2)(1+d(Dk)^2)}\\
&=1-(a+b+c+d)(Dk)^2+\left((a + b) (b + c) + (a + b + c) d + d^2\right)(Dk)^4+O\left((Dk)\right)^6\ .
\end{aligned}
\end{equation}
If we choose the coefficients $a,b,c,d$ of (\ref{eq:abcdcoef2}) such that the formula (\ref{eq:abcdisperel}) coincides with (\ref{eq:eulerdisprel}) up to the term of order $(Dk)^4$, then we obtain the optimized extended Boussinesq system for $\nu=0$, $\mu=1$ and $\theta^2=1/5$. This is the system which we call Nwogu system and is practically the only system of the $abcd$ class of systems with such an optimal dispersive behavior. Notable systems of the $abcd$ class that have appeared in various works \cite{BC1998,BCS2002,BS1976,DM2008} are the following:
\begin{itemize}[leftmargin=*]
\item The Peregrine system ($a=b=c=0$, $d=1/3$, i.e, $\nu=\mu=0$, $\theta^2=1/3$)
\item The BBM-BBM system ($a=c=0$, $b=d=1/6$, i.e. $\nu=\mu=0$, $\theta^2=2/3$)
\item The Bona-Smith systems ($a=0$, $b=d=(3\theta^2-1)/6$, $c=(2-3\theta^2)/3$, $2/3<\theta^2\leq1$, i.e. $\nu=0$, $\mu=(4-6\theta^2)/3(1-\theta^2)$)
\item The KdV-KdV system ($a=c=1/6$, $b=d=0$, i.e. $\nu=\mu=1$, $\theta^2=2/3$)
\item The Nwogu system ($a=-1/15$, $b=c=0$, $d=2/5$, i.e. $\nu=0$, $\mu=1$, $\theta^2=1/5$)
\end{itemize}

Next, we will prove that all these systems are asymptotically equivalent. On the other hand, it worth mentioning that some of these systems cannot be used for the physical description of surface water waves. For example, the KdV-KdV system does not admit classical solitary wave solution (except perhaps embedded ones) but instead has generalized solitary waves that decay exponentially to small periodic orbits. Such solutions have infinite energy and thus they are not physically meaningful \cite{DM2008,BDM2007i,BDM2008ii}. 

\section{Consistency with the Euler equations}\label{consistwithEulerr}

In this section we consider the Boussinesq system (\ref{eq:eul28})--(\ref{eq:eul29}) with stationary bottom topography and parameters $a,b,c,d$ as in (\ref{eq:abcdcoef2})
\begin{equation}\label{eq:Nwogunabcd}
\begin{aligned}
& \eta_{t}+\nabla\cdot[(D+\varepsilon\eta)\bu]+\sigma^2 \Div\left\{a\nabla(D^3\Div\bu)-bD^2\nabla\eta_t\right\}=0\ , \\
& {\bu}_{t}+\nabla\eta+\frac{\varepsilon}{2}\nabla|\bu|^2+\sigma^2 \nabla\left\{c\Div(D^2\nabla\eta)-d\Div(D^2\bu_t)\right\}=0\ .
\end{aligned}
\end{equation}
Note that in the notation of this section we have dropped the tilde in nondimensional variables.

Consider also the Euler equations in the nondimensional and scaled Zakharov formulation:
 \begin{equation}\label{eq:S0}
\begin{aligned}
&\eta_t = \cfrac{1}{\varepsilon}\,Z_{\varepsilon}[\varepsilon\eta,\beta b]\psi\ , \\
& \psi_t + \eta + \cfrac{\varepsilon}{2} |\nabla\psi|^{2} - \cfrac{\varepsilon}{\sigma^2}\cfrac{(Z_{\varepsilon}[\varepsilon\eta,\beta b]\psi + \varepsilon\sigma^2\nabla\eta\cdot\nabla\psi)^2}{2(1 + \varepsilon^{2}\sigma^2|\nabla\eta|^2)} = 0 \ ,
\end{aligned}
\end{equation}
where $\psi$ is the velocity potential and $Z_{\varepsilon}[\varepsilon\eta,\beta b]\psi = \sqrt{1+\varepsilon^2|\sigma^2\nabla\eta|^2}\partial_n\Phi_{|z=\varepsilon\eta}$ is the Dirichlet-Neumann operator. Following  \cite{Lannes13} (Section 5.3) one can derive the following four-parameter family of Boussinesq systems of order $O(\varepsilon\sigma^2,\beta\sigma^2)$
\begin{equation}\label{eq:S1}
\begin{aligned}
&\eta_t + \nabla\cdot[(D+\varepsilon\eta)\bu] + \sigma^2\Div [\mathbf{a} D^{2}\nabla(\Div D\bu) + \mathbf{b} D^{3}\nabla(\Div \bu)] = 0\ , \\
& \bu_t + \nabla\eta + \varepsilon(\bu\cdot\nabla)\bu +\sigma^2\nabla[\mathbf{c} D\Div(D\bu_t) + \mathbf{d} D^{2}\Div\bu_t] = 0 \ ,
\end{aligned}
\end{equation}
where $\bu$ is the horizontal velocity at the level $(D + \varepsilon\eta)z+\varepsilon\eta$ and $z=-1+\frac{1}{\sqrt{3}}\sqrt{1+2(\delta -\theta)}$,\\
    with $$ \bold{a} = \frac{\delta + \lambda}{3},\;\;  \bold{b} = -\frac{\theta + \lambda}{3},\;\; \bold{c} = -\frac{\alpha + \delta - 1}{3},\;\; \bold{d} = \frac{\alpha + \theta}{3}\ .$$
    
Here, we will prove consistency results between the asymptotic model (\ref{eq:Nwogunabcd}) and the Euler equations in case of small amplitude topographic variations $\beta = O(\sigma^2)$. However, note that from \cite{Lannes13} the Zakharov system (\ref{eq:S0}) and the Boussinesq system (\ref{eq:S1}) are consistent (see Remark 5.35 in \cite{Lannes13}). In particular, for $\delta = \frac{3}{2} \theta^{2} + \theta -\frac{1}{2} $, $\lambda = -\frac{3}{2} \theta^{2} +2\theta -1$ and $\alpha = -\frac{3}{2}\theta^{2}+2\theta -\frac{3}{2}$, we get Nwogu’s system 
 \begin{equation}\label{eq:Nwogu1}
\begin{aligned}
& \eta_t+\Div((D+\varepsilon\eta)\bu)+\sigma^2\Div[\bar{a}D^2\nabla(\Div(D\bu))+\bar{b}D^3\nabla(\Div\bu)]=0\ ,\\
& \bu_t+\nabla\eta+\varepsilon(\bu\cdot\nabla)\bu-\sigma^2[\bar{c}D\nabla(\Div(D\bu_t))+\bar{d}D^2\nabla(\Div\bu_t)]=0\ ,
\end{aligned}
\end{equation}
where $\bar{a}=\theta-1/2$, $\bar{b}=1/2[(\theta-1)^2-1/3]$, $\bar{c}=1-\theta$ and $\bar{d}=1/2(1-\theta)^2$.  Next we will show in detail the consistency of the new system (\ref{eq:Nwogunabcd}) with Nwogu system (\ref{eq:Nwogu1}). But we shall first introduce the definition of consistency along the lines of \cite{BCL2005}. 
\begin{definition}\label{defcons}
Let $p, s \in \mathbb{R}$, $\varepsilon_0 > 0$, $T>0$ and let ${(\bu^{\varepsilon},
\eta^{\varepsilon})_{0<\varepsilon<\varepsilon_0}}$ be a family of solutions of a Boussinesq system $(S_1)$ bounded in
$W^{1,\infty}([0,\frac{T}{\varepsilon}];\,H^{p}(\mathbb{R}^2)^{3})$ independently of $\varepsilon$.
This family is called consistent (with regularity $p$ and $s$) with a system $(S_2)$ if
it  satisfies the system $(S_2)$ with a residual of order $\varepsilon^2$ in $L^{\infty}([0,\frac{T}{\varepsilon}];\,H^{s}(\mathbb{R}^2)^{3})$.
\end{definition}

First we show consistency between the classical Nwogu system (\ref{eq:Nwogu}) and the new Nwogu-type system
\begin{equation}\label{eq:Nwogu2}
\begin{aligned}
& \eta_t+\Div((D+\varepsilon\eta)\bu)+\sigma^2a\Div[D^3\nabla(\Div\bu)]=0\ ,\\
& \bu_t+\nabla\eta+\tfrac{\varepsilon}{2}\nabla|\bu|^2 -\sigma^2b\nabla[\Div(D^2\bu_t)]=0\ ,
\end{aligned}
\end{equation}
where $b=\frac{1}{3}-a=\frac{1}{2}(1-\theta^2)>0$, $a<0$. This system is the special case of system (\ref{eq:Nwogunabcd}) for $\nu=0$, $\mu=1$, $0<\theta^2<1/3$.

\begin{lemma}\label{thrm:curl}
 Let $t_0>1$, $s\geq t_0+3$,
$U_0=(\eta_0,\bu_0)\in H^{s}(\mathbb{R}^2)\times H^{s+2}(\mathbb{R}^2)^2$ with $\vert \Curl \bu_0\vert_{H^{s+1}}\leq \sigma^2 C(\vert U_0\vert_{H^s\times H^{s+2}})$. Then,
 the solution $U=(\eta, \bu)\in C([0,\frac{T}{\varepsilon}];\,H^{s}(\mathbb{R}^2)\times H^{s+2}(\mathbb{R}^2)^2)$
of the new Nwogu system of Boussinesq type (\ref{eq:Nwogu2}) with the initial condition $(\eta_0,\bu_0)$ satisfies
$$
\vert \Curl\bu\vert_{L^{\infty}([0,\frac{T}{\varepsilon}],H^{s+1})}\leq\sigma^2\, C(T,\vert U_0\vert_{H^s\times H^{s+2}})\quad \forall\,0<T<T_{max}\ .
$$
\end{lemma}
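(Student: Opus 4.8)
The plan is to exploit the defining structural feature of the momentum equation in (\ref{eq:Nwogu2}): apart from the time derivative $\bu_t$, every term is the gradient of a scalar. Indeed, $\nabla\eta$, $\tfrac{\varepsilon}{2}\nabla|\bu|^2$ and $\sigma^2 b\,\nabla[\Div(D^2\bu_t)]$ are all of the form $\nabla(\cdot)$, so applying the two–dimensional scalar curl $\Curl$ to that equation annihilates each of them, since $\Curl\nabla f=0$ for every scalar field $f$. What survives is exactly $\Curl\bu_t=0$, and because $\Curl$ commutes with $\partial_t$ for solutions of the assumed regularity, this is precisely $\partial_t(\Curl\bu)=0$. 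Thus the vorticity is not merely kept small but is \emph{exactly} conserved by the flow, in agreement with the remark following (\ref{eq:Nwogunabcdd2}) that $\Curl\bu$ is preserved for stationary bottoms.

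From this the estimate is immediate. Integrating $\partial_t(\Curl\bu)=0$ in time gives $\Curl\bu(\cdot,t)=\Curl\bu_0$ for every $t\in[0,\tfrac{T}{\varepsilon}]$; taking the $H^{s+1}$ norm and inserting the hypothesis $\vert\Curl\bu_0\vert_{H^{s+1}}\leq\sigma^2 C(\vert U_0\vert_{H^s\times H^{s+2}})$ yields
\[
\vert\Curl\bu\vert_{L^{\infty}([0,\frac{T}{\varepsilon}],H^{s+1})}\leq\sigma^2\,C(\vert U_0\vert_{H^s\times H^{s+2}}),
\]
which is in fact stronger than the stated bound, since no growth in $T$ is incurred; stating the weaker, $T$–dependent inequality is harmless and is all that the subsequent consistency argument requires.

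The one point demanding care, and what I regard as the genuine technical obstacle, is to make these manipulations rigorous — in particular to guarantee that $\bu_t$ is a well-defined function, regular enough that $\Div(D^2\bu_t)$, its gradient, and the curl of that gradient are all meaningful. To this end I would first recover $\bu_t$ from the momentum equation by inverting the operator $\mathcal{L}:=I-\sigma^2 b\,\nabla[\Div(D^2\,\cdot\,)]$, writing $\mathcal{L}\bu_t=-\nabla(\eta+\tfrac{\varepsilon}{2}|\bu|^2)$. Because $b=\tfrac12(1-\theta^2)>0$ and $D$ is smooth and bounded below, a standard integration-by-parts/coercivity argument shows that $\mathcal{L}$ is invertible on the relevant Sobolev scale, so $\bu_t$ exists and inherits its regularity from the right-hand side. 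The assumptions $t_0>1$ and $s\geq t_0+3$ then furnish the algebra and tame product estimates (so that $\eta+\tfrac{\varepsilon}{2}|\bu|^2\in H^{s}$ and $\bu$ is $C^1$ in time with values in a space on which $\Curl$ acts), which closes the justification. The key simplification is that, once $\bu_t$ is known to be an honest function, $\Curl\nabla[\Div(D^2\bu_t)]=0$ holds irrespective of the detailed form of $\bu_t$; this is exactly why the vorticity equation collapses to $\partial_t(\Curl\bu)=0$ with no remainder, so that no Gronwall inequality is needed at all.
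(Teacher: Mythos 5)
Your proposal is correct and follows essentially the same route as the paper: apply the curl to the momentum equation of (\ref{eq:Nwogu2}), observe that every term other than $\bu_t$ is a gradient, conclude $\Curl\bu_t=0$ and hence $\Curl\bu=\Curl\bu_0$, and invoke the hypothesis on the initial vorticity. The only difference is presentational — where you justify the existence and regularity of $\bu_t$ by inverting the elliptic operator $I-\sigma^2 b\,\nabla[\Div(D^2\,\cdot\,)]$, the paper simply cites the well-posedness theorem of Saut and Xu \cite{saut2012cauchy} for the same purpose.
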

\begin{proof}
The proof of existence is demonstrated in (\cite[Theorem 1.1]{saut2012cauchy}). Applying the operator $ \hbox{curl}\,( \cdot)$ to the second equation of the system (\ref{eq:Nwogu2}) we see that
$$
\Curl\bu_t= 0\ ,
$$
which implies $\Curl\bu=\Curl \bu_0$, and the result follows.
\end{proof}

\begin{lemma}\label{newformulate}
Let  $\bu$, $\eta$ be solutions either of (\ref{eq:Nwogu1}) or (\ref{eq:Nwogu2}) and $D=1+\beta D_b$ be a bottom parametrization, then
\begin{equation}\label{definewv}
\begin{aligned}
&\Div[D^3\nabla(\Div\bu)]=\Div[D^2\nabla(\Div(D\bu))]+O(\beta)\ ,\\
&\Div(D\bu)=-\eta_t+O(\varepsilon,\sigma^2)\ ,\\
&\bu_t=-\nabla\eta+O(\varepsilon,\sigma^2)\ .
\end{aligned}
\end{equation}
\end{lemma}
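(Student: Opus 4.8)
The plan is to verify each of the three asymptotic identities separately, treating them as consequences of the bottom parametrization $D = 1 + \beta D_b$ together with the structure of the governing equations \eqref{eq:Nwogu1} or \eqref{eq:Nwogu2}. For the first identity, the strategy is purely algebraic: I would expand both sides using the Leibniz rule. Writing $D = 1 + \beta D_b$, every factor of $D$ differs from $1$ by a term of order $\beta$, so $D^3 = 1 + O(\beta)$ and $D^2 = 1 + O(\beta)$, and likewise $\Div(D\bu) = \Div\bu + O(\beta)$. The leading-order term of each side is therefore $\Div[\nabla(\Div\bu)] = \Div\nabla\Div\bu$, and all discrepancies between the two sides are generated by derivatives falling on the $\beta D_b$ part of the coefficients. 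Collecting these yields the stated $O(\beta)$ remainder. The only mild subtlety is bookkeeping: one must confirm that no term of order $O(1)$ survives in the difference, which follows because the two expressions agree exactly when $D \equiv 1$.

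For the second and third identities, the approach is to read off the leading-order balance directly from the equations. In the mass equation of either system, the dispersive corrections carry an explicit factor $\sigma^2$ and the nonlinear surface term carries $\varepsilon$; dropping these gives $\eta_t + \Div(D\bu) = O(\varepsilon,\sigma^2)$, which rearranges to the claimed relation $\Div(D\bu) = -\eta_t + O(\varepsilon,\sigma^2)$. (For system \eqref{eq:Nwogu2}, whose mass equation contains $\Div[(D+\varepsilon\eta)\bu]$, the term $\varepsilon\Div(\eta\bu)$ is absorbed into the $O(\varepsilon)$ remainder.) Similarly, in the momentum equation the terms $\tfrac{\varepsilon}{2}\nabla|\bu|^2$ (or $\varepsilon(\bu\cdot\nabla)\bu$) and the $\sigma^2$-dispersive group are higher order, leaving $\bu_t + \nabla\eta = O(\varepsilon,\sigma^2)$, i.e. $\bu_t = -\nabla\eta + O(\varepsilon,\sigma^2)$.

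I expect the main obstacle to be a matter of rigor rather than ingenuity, namely justifying that these formal orderings are genuine estimates in the relevant norms. To make the remainders precise one needs the solutions $(\eta,\bu)$ to be bounded in a sufficiently high Sobolev space uniformly in $\varepsilon$, so that quantities such as $\Div(\eta\bu)$, $\nabla|\bu|^2$, and the third-order dispersive terms can be controlled by the product and algebra properties of $H^s$ for $s > t_0 > 1$. Under the regularity hypotheses inherited from Lemma \ref{thrm:curl} and the consistency framework of Definition \ref{defcons}, these estimates hold, and the three relations follow. The first identity is the genuinely new computation; the latter two are essentially a restatement of the lowest-order part of the system and require only that the neglected terms carry the advertised small parameters.
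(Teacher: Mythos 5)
Your proposal is correct and follows essentially the same route as the paper's (very terse) proof: the first identity is obtained by expanding with $D = 1+\beta D_b$ so that all discrepancies carry a factor of $\beta$, and the second and third identities are read off from the system by discarding the terms carrying explicit factors of $\varepsilon$ and $\sigma^2$. Your added remarks on Sobolev regularity simply make explicit the bookkeeping the paper leaves implicit.
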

\begin{proof}
The first identity follows from the fact that $D(\bx)=1+\beta D_b(\bx)$. The other two follow from the system (\ref{eq:Nwogu2}).
\end{proof}

Here, we present the main result of this section. For simplicity, we take $\beta = \varepsilon = O(\sigma^2)$ for $\sigma^2\ll 1$.
\begin{proposition}\label{propcons}
Let $s$ large enough, $T>0$,  and $(\eta^\varepsilon,\bu^\varepsilon)_{0<\varepsilon<\varepsilon_0}$ be a family of solutions of Nwogu system (\ref{eq:Nwogu1}) bounded in $W^{1,\infty}\big([0,\frac{T}{\varepsilon}]; H^s(\mathbb{R}^2)^3\big)$. If $D(\bx)=1+\beta D_b(\bx)$, then according to the Definition \ref{defcons} this family is consistent with (\ref{eq:Nwogunabcd}).
\end{proposition}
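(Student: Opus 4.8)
The plan is to insert a given solution $(\eta^\varepsilon,\bu^\varepsilon)$ of the Nwogu system (\ref{eq:Nwogu1}) into the new system (\ref{eq:Nwogunabcd}) and to estimate the resulting residual $(R_1,R_2)$ of the mass and momentum equations in $L^\infty([0,\tfrac{T}{\varepsilon}];H^{s'})$ for some fixed $s'<s$, which is exactly what Definition \ref{defcons} requires. Under the standing scaling $\beta=\varepsilon=O(\sigma^2)$ each of the bookkeeping symbols $\sigma^4,\varepsilon\sigma^2,\beta\sigma^2$ is $O(\varepsilon^2)$, so it suffices to show that the two systems differ, term by term, only by quantities of these types. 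Each residual splits into a \emph{nonlinear discrepancy} and a \emph{dispersive discrepancy}, and because the family is bounded in $W^{1,\infty}([0,\tfrac{T}{\varepsilon}];H^s)$, every product and derivative appearing below is controlled in $H^{s'}$ by the Sobolev algebra property (valid for $s'>1$), with a few derivatives lost to the third-order operators.

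For the dispersive discrepancies I would exploit that $D=1+\beta D_b$, so $\nabla D=O(\beta)$ and the competing operator forms collapse: by Leibniz together with the first identity of Lemma \ref{newformulate}, the three expressions $\Div[D^2\nabla(\Div(D\bu))]$, $\Div[D^3\nabla(\Div\bu)]$ and $\Div[\nabla(D^3\Div\bu)]$ agree modulo $O(\beta)$, and likewise $D\nabla(\Div(D\bu_t))$, $D^2\nabla(\Div\bu_t)$ and $\nabla(\Div(D^2\bu_t))$ all equal $\nabla(\Div\bu_t)$ modulo $O(\beta)$. It then only remains to match the coefficients, a direct computation from (\ref{eq:abcdcoef2}) and the Nwogu values giving $\bar a+\bar b=a$ in the mass equation and $\bar c+\bar d=d$ in the momentum equation. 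Wherever the new system carries a dispersive term in $\eta_t$ or $\nabla\eta$ instead of in $\bu$ (the $b$- and $c$-terms), I would trade it against the corresponding $\bu$-term using the relations $\Div(D\bu)=-\eta_t+O(\varepsilon,\sigma^2)$ and $\bu_t=-\nabla\eta+O(\varepsilon,\sigma^2)$ of Lemma \ref{newformulate}; since these substitutions sit behind a factor $\sigma^2$, each costs only $O(\sigma^2(\varepsilon+\sigma^2))=O(\varepsilon^2)$. Collecting, the dispersive discrepancy is $O(\sigma^2(\beta+\varepsilon+\sigma^2))=O(\varepsilon^2)$.

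The nonlinear discrepancy is where the real work lies. The two momentum equations differ by $\varepsilon(\bu\cdot\nabla)\bu-\tfrac{\varepsilon}{2}\nabla|\bu|^2=\varepsilon\,(\Curl\bu)\,\bu^{\perp}$, so I must show $\Curl\bu=O(\sigma^2)$ uniformly on the long interval $[0,\tfrac{T}{\varepsilon}]$. For the new Nwogu system this is immediate from Lemma \ref{thrm:curl}, where the curl is exactly conserved, but for the classical system (\ref{eq:Nwogu1}) both the convective term and the variable-bottom dispersive terms feed vorticity, so the curl is only approximately conserved. Applying $\Curl$ to the second equation of (\ref{eq:Nwogu1}) yields a linear transport equation $\partial_t(\Curl\bu)=-\varepsilon\,\Div\!\big((\Curl\bu)\,\bu\big)+O(\sigma^2\beta)$, and a Gronwall estimate over $[0,\tfrac{T}{\varepsilon}]$—in which the prefactor $\varepsilon$ exactly compensates the $1/\varepsilon$ length of the interval—keeps $\Curl\bu=O(\sigma^2)$ provided the initial vorticity satisfies $\Curl\bu_0=O(\sigma^2)$, as in the hypothesis of Lemma \ref{thrm:curl}. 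This long-time vorticity control is the main obstacle; once it is secured one has $\varepsilon(\Curl\bu)\,\bu^{\perp}=O(\varepsilon\sigma^2)=O(\varepsilon^2)$, and assembling the mass and momentum bounds in $H^{s'}$ through the uniform $W^{1,\infty}_tH^s$ estimate closes the argument.
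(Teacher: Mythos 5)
Your proposal is correct, and for the dispersive discrepancies it coincides with the paper's own argument: both proofs plug the solution of (\ref{eq:Nwogu1}) into (\ref{eq:Nwogunabcd}), use the identities of Lemma \ref{newformulate} to collapse the competing operator forms modulo $O(\beta)$ (since $\nabla D=\beta\nabla D_b$), match the coefficient sums, and trade the $\sigma^2$-terms carried by $\eta_t$ or $\nabla\eta$ against $\bu$-terms at a cost $O(\sigma^2(\varepsilon+\sigma^2))=O(\varepsilon^2)$, arriving at a residual of order $\beta^2$ in $H^{s'}$ for suitable $s'<s$. Where you genuinely diverge is the nonlinear discrepancy $\varepsilon(\bu\cdot\nabla)\bu-\tfrac{\varepsilon}{2}\nabla|\bu|^2$. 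The paper disposes of it by simply invoking Lemma \ref{thrm:curl}; but that lemma is stated and proved for the \emph{new} system (\ref{eq:Nwogu2}), whose momentum equation is a pure gradient so that $\Curl\bu$ is exactly conserved, whereas the solutions in Proposition \ref{propcons} solve the \emph{classical} system (\ref{eq:Nwogu1}), in which both the convective term and the variable-depth dispersive terms generate vorticity. Your transport-plus-Gronwall argument, $\partial_t(\Curl\bu)+\varepsilon\Div((\Curl\bu)\bu)=O(\sigma^2\beta)$ with the prefactor $\varepsilon$ compensating the $1/\varepsilon$ length of the time interval, is exactly what is needed to make this step rigorous for (\ref{eq:Nwogu1}); it costs more work but closes a step that the paper settles by a citation which, strictly read, applies to the wrong system. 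Note that both routes require smallness of the initial vorticity, $\Curl\bu_0=O(\sigma^2)$: you state this explicitly, while in the paper it enters only implicitly through the hypothesis of Lemma \ref{thrm:curl} and is absent from the statement of the proposition. One small caution on your coefficient matching: the identity $\bar a+\bar b=a+b=\tfrac12(\theta^2-\tfrac13)$ does follow from (\ref{eq:abcdcoef2}), but $\bar c+\bar d=(1-\theta)+\tfrac12(1-\theta)^2$ does not equal $c+d=\tfrac12(1-\theta^2)$ for the same $\theta$; the paper sidesteps this by \emph{defining} $B=\bar c+\bar d$ and $c=B\nu$, $d=B(1-\nu)$, and your argument should do the same rather than claim the match as a computation from (\ref{eq:abcdcoef2}).
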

\begin{proof} 
For the sake of simplicity we drop the index $\varepsilon$ from the notation of  $(\eta^\varepsilon,\bu^\varepsilon)$ but we will still assume solutions of (\ref{eq:Nwogu1}). From the first equation of (\ref{eq:Nwogu1}), it suffices to prove that  $$\Div[\bar{a}D^2\nabla(\Div(D\bu)) + \bar{b}D^3\nabla(\Div\bu)] = \Div[a\nabla(D^3\Div\bu) - bD^2\nabla\eta_t]+O(\beta)  $$ 
where $a = A\mu$, $b = A(1-\mu)$, $\mu\in\mathbb{R}$ and $A=\bar{a}+\bar{b}$.

Using Lemma \ref{newformulate} we have 
\begin{align*}
  \Div[\bar{a}D^2\nabla(\Div(D\bu))+\bar{b}D^3\nabla(\Div\bu)] &= \bar{a}\Div[D^3\nabla(\Div\bu)]+\bar{b}\Div[D^3\nabla(\Div\bu)] +O(\beta) \\ 
  &= A(\mu + 1 - \mu)\Div[D^3\nabla(\Div\bu)] +O(\beta) \\
  &= a\Div[D^3\nabla(\Div\bu)] + b\Div[D^3\nabla(\Div\bu)] +O(\beta) \\
   &= a\Div[\nabla(D^3\Div\bu)] + b\Div[D^2\nabla(\Div(D\bu))] +O(\beta) \\
    &= a\Div[\nabla(D^3\Div\bu)] - b\Div[D^2\nabla\eta_t] +O(\beta)\ ,
    \end{align*}
    where $O(\beta) = \beta f$ and $f \in L^{\infty}([0,\frac{T}{\varepsilon}];\,H^{s-5}(\mathbb{R}^2))$.
    Therefore, $$\Div[\bar{a}D^2\nabla(\Div(D\bu))+\bar{b}D^3\nabla(\Div\bu)] = \Div[a\nabla(D^3\Div\bu) - bD^2\nabla\eta_t] \ , $$
  in the sense of consistency.
 Thus, $(\eta^\varepsilon,\bu^\varepsilon)$ is consistent with the first equation in (\ref{eq:Nwogunabcd}) with a residual of order $\beta^2$ in $L^{\infty} ([0,\frac{T}{\varepsilon}];\, H^{s-5}(\mathbb{R}^2))$.
        
  The same situation holds for the second equation of (\ref{eq:Nwogu1}). Here we denote $c = B\nu$, $d = B(1-\nu)$, $\nu\in\mathbb{R}$ and $B=\bar{c}+\bar{d}$. From Lemma \ref{thrm:curl}, and again using Lemma \ref{newformulate} we see that 
  \begin{align*}
      \bar{c}D\nabla(\Div(D\bu_t))+\bar{d}D^2\nabla(\Div\bu_t) &=  \bar{c}\nabla(D\Div(D\bu_t))+\bar{d}\nabla(D^2\Div \bu_t) + O(\beta) \\
      &= \bar{c}\nabla(\Div(D^2\bu_t))+\bar{d}\nabla(\Div ( D^2\bu_t)) + O(\beta) \\
      &= B\nabla(\Div(D^2\bu_t)) + O(\beta) \\
      &= c\nabla(\Div(D^2\bu_t)) + d\nabla(\Div(D^2\bu_t)) + O(\beta) \\
      &= -c\nabla(\Div(D^2\nabla\eta)) + d\nabla(\Div(D^2\bu_t)) + O(\beta) \ ,
  \end{align*} 
where $O(\beta) = \beta g$ and $g \in L^\infty([0,\frac{T}{\varepsilon}];\;H^{s-4}(\mathbb{R}^2)^{2})$. So, 
$$\bar{c}D\nabla(\Div(D\bu_t))+\bar{d}D^2\nabla(\Div\bu_t) = -c\nabla(\Div(D^2\nabla\eta)) + d\nabla(\Div(D^2\bu_t)) \ ,$$
in the sense of consistency. Therefore, $(\eta^\varepsilon,\bu^\varepsilon)$ is consistent with the second equation of (\ref{eq:Nwogunabcd}) with a residual of order $\beta^2$ in $L^{\infty}([0,\frac{T}{\varepsilon}];\,H^{s-4}(\mathbb{R}^2)^{2})$. We conclude that $(\eta^\varepsilon,\bu^\varepsilon)$ is consistent with (\ref{eq:Nwogunabcd}) in the sense of definition \ref{defcons} and the proof is complete.
  \end{proof}
  
The Euler equations are consistent with the System (\ref{eq:Nwogu1}) at order $\varepsilon^2$ in $C([0,\frac{T}{\varepsilon}];\,\dot{H} ^{s+10} \times H^{s+5})$ as it was proved in \cite[Corollary 5.31]{Lannes13}. A direct consequence of this is that the Euler equations are consistent with the new class of systems (\ref{eq:Nwogunabcd}).

\begin{theorem}[Error estimates]
Let $s\geq 0$, $\varepsilon_0>0$, $b\in H^{\infty}(\mathbb{R}^2)$ and let $U_0=(\eta_0^\varepsilon,\psi_0^\varepsilon)_{0<\varepsilon<\varepsilon_0}$ be a bounded family in $H^p(\mathbb{R}^2)\times \dot{H}^{p+1}(\mathbb{R}^2)$($p\geq s$ large enough). Assume moreover that there is $h_{\min}>0$ such that the non-cavitation condition $h=1+\varepsilon\eta_0^\varepsilon+\beta D_b\geq h_{\min}$ is satisfied, and that there exists $T>0$ such that
\begin{itemize}[leftmargin=*]
\item[$-$] There is a unique family of solutions $U=(\eta^\varepsilon_E, \psi^\varepsilon_E)\in C\big([0,\frac{T}{\varepsilon}];\,H^p(\mathbb{R}^2)\times \dot{H}^{p+1}(\mathbb{R}^2)\big)$ of the Euler equations (\ref{eq:S0}) with initial condition $U_0$.
\item[$-$] There is a unique family of solutions $(\eta_B^\varepsilon,\bu_B^\varepsilon)\in C\big([0,\frac{T}{\varepsilon}];\,H^p(\mathbb{R}^2)^3\big)$ of (\ref{eq:Nwogunabcd}) such that $(\eta_B^\varepsilon,\bu_B^\varepsilon)\vert_{(t=0)}=(\eta_0^\varepsilon,\nabla\psi_0^\varepsilon)$.
\end{itemize}
Then, if $(\eta^\varepsilon_E, \bu^\varepsilon_E)_{0<\varepsilon<\varepsilon_0}$ and $(\eta_B^\varepsilon, \bu_B^\varepsilon)_{0<\varepsilon<\varepsilon_0}$ are bounded in $W^{1,\infty}\big([0,\frac{T}{\varepsilon}];\,H^p(\mathbb{R}^2)^3\big)$ w.r.t $\varepsilon$, then for all $0<\varepsilon<\varepsilon_0$ and $t\in \left[0,\frac{T}{\varepsilon}\right]$
\begin{equation}\label{eq:errestim}
|\eta_B^\varepsilon - \eta^\varepsilon_E|_{L^{\infty}([0,t]\times H^s)} + |\bu_B^\varepsilon- \bu^\varepsilon_E|_{L^{\infty}([0,t]\times H^s)} \leq C\;\varepsilon^2t
\end{equation}
where $\bu^\varepsilon_E=\nabla\psi^\varepsilon_E$.
\end{theorem}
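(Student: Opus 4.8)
The plan is to combine the consistency of the Euler equations with the new system (\ref{eq:Nwogunabcd}) with a stability (energy) estimate for the difference of two solutions, and then close the argument with Gronwall's inequality over the long time interval $[0,T/\varepsilon]$. This is the standard ``consistency $+$ well-posedness $+$ stability $\Rightarrow$ convergence'' mechanism, and here existence and uniform boundedness of both families are already part of the hypotheses, so only the stability estimate remains to be produced.

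First I would invoke the consistency that is already in place. By Proposition \ref{propcons} together with the consistency of the Euler equations with Nwogu's system (\ref{eq:Nwogu1}) from \cite[Corollary~5.31]{Lannes13} and the transitivity of the consistency relation, the exact Euler solution $(\eta^\varepsilon_E,\bu^\varepsilon_E)$, with $\bu^\varepsilon_E=\nabla\psi^\varepsilon_E$, satisfies the new system (\ref{eq:Nwogunabcd}) up to a residual $\bR^\varepsilon=(r^\varepsilon_1,\bR^\varepsilon_2)$ that is bounded by $C\varepsilon^2$ in $L^\infty([0,T/\varepsilon];H^s(\R^2)^3)$, uniformly in $\varepsilon$; the uniform $W^{1,\infty}$ bounds and the non-cavitation condition $h\geq h_{\min}$ are exactly what is needed to control this residual. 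I would then form the error $(\zeta,\bv)=(\eta^\varepsilon_B-\eta^\varepsilon_E,\bu^\varepsilon_B-\bu^\varepsilon_E)$, which vanishes at $t=0$ and, after subtracting the exact copy of (\ref{eq:Nwogunabcd}) satisfied by $(\eta^\varepsilon_B,\bu^\varepsilon_B)$ from the residual copy satisfied by $(\eta^\varepsilon_E,\bu^\varepsilon_E)$, solves a linear system whose coefficients depend on the two bounded reference solutions and whose forcing is $-\bR^\varepsilon$.

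The heart of the argument is the energy estimate for this error system. The key structural point is that the dispersive parts of (\ref{eq:Nwogunabcd}) enter through the operators $I-b\,\Div(D^2\nabla\,\cdot)$ and $I-d\,\nabla\Div(D^2\,\cdot)$; under the well-posedness conditions $b,d\geq 0$ and $D\geq h_{\min}>0$ these are self-adjoint and positive, so (after inverting them where $b,d>0$) the error system can be symmetrized and one can construct an energy $E_s(t)$ equivalent to $|\zeta|_{H^s}^2+|\bv|_{H^s}^2$. Differentiating $E_s$ in time, the order-one terms $\nabla\zeta$ and $\Div\bv$ cancel after integration by parts against the symmetric structure, the dispersive contributions are absorbed into $E_s$ by self-adjointness, and every remaining term is genuinely small: the nonlinear contributions $\varepsilon\,\Div(\eta\bu)$ and $\tfrac{\varepsilon}{2}\nabla|\bu|^2$ carry an explicit factor $\varepsilon$, while the variable-bottom commutators are $O(\beta)=O(\varepsilon)$ since $\beta=\varepsilon=O(\sigma^2)$. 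Estimating these by Moser-type product and commutator inequalities in $H^s$ yields $\tfrac{d}{dt}E_s\leq C\varepsilon E_s+C\varepsilon^2\sqrt{E_s}$, i.e. $\tfrac{d}{dt}\sqrt{E_s}\leq C\varepsilon\sqrt{E_s}+C\varepsilon^2$. Since $\sqrt{E_s(0)}=0$, Gronwall's lemma gives $\sqrt{E_s(t)}\leq C\varepsilon^2\int_0^t e^{C\varepsilon(t-\tau)}\,d\tau$, and on $[0,T/\varepsilon]$ the exponential is bounded by $e^{CT}$, so $\sqrt{E_s(t)}\leq C(T)\,\varepsilon^2 t$; returning to the equivalent norms gives precisely (\ref{eq:errestim}).

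The main obstacle I expect is this energy estimate with variable bottom. One must symmetrize the linearized system in the presence of the $D$-dependent dispersive operators and verify that every commutator and nonlinear term really does carry a factor $\varepsilon$ (this is where $\beta=\varepsilon=O(\sigma^2)$ and the non-cavitation bound are essential), because it is exactly this $\varepsilon$ that upgrades control from times $O(T)$ to the long time scale $T/\varepsilon$ and produces the $\varepsilon^2 t$ rate rather than a mere $O(\varepsilon)$ bound.
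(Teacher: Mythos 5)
Your overall skeleton (consistency of the Euler solutions with (\ref{eq:Nwogunabcd}), a stability estimate for the difference, then Gronwall on $[0,T/\varepsilon]$) is the mechanism that ultimately underlies the paper's argument, but the step you dismiss as routine --- the energy estimate for the error system --- is precisely the main difficulty, and as you describe it, it fails for the very systems the theorem covers. Your symmetrization discussion accounts only for the regularizing operators $I-b\,\Div(D^2\nabla\,\cdot)$ and $I-d\,\nabla\Div(D^2\,\cdot)$ and never addresses the $a$- and $c$-dispersive terms, which are not of regularizing type. Already at the linear, flat-bottom level, pairing the mass equation with $\Lambda^s\zeta$ and the momentum equation with $\Lambda^s\bv$ leaves the cross contribution $(a-c)\int \Lambda^s\Delta\zeta\;\Lambda^s\Div\bv$, which does not cancel unless $a=c$ and loses two derivatives relative to your candidate energy $E_s\sim|\zeta|_{H^s}^2+|\bv|_{H^s}^2$ (augmented by $b|\nabla\zeta|_{H^s}^2+d|\Div\bv|_{H^s}^2$). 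For the Nwogu parameters $a=-1/15$, $b=c=0$, $d=2/5$ --- the flagship case of the paper --- one has $a\neq c$ \emph{and} $b=0$, so neither cancellation nor absorption by the regularization is available, and your parenthetical ``after inverting them where $b,d>0$'' does not apply; for KdV--KdV type parameters ($b=d=0$) the $O(\varepsilon)$ transport terms additionally lose a derivative that nothing absorbs. Hence the inequality $\frac{d}{dt}E_s\leq C\varepsilon E_s+C\varepsilon^2\sqrt{E_s}$ cannot be obtained by the Moser/commutator estimates you invoke; it is asserted, not proved.

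The paper avoids this difficulty entirely by a different route: it never estimates the difference $(\eta_B^\varepsilon-\eta_E^\varepsilon,\bu_B^\varepsilon-\bu_E^\varepsilon)$ directly. Instead, following \cite{BCL2005}, it performs a nonlinear change of variables mapping to a class of \emph{symmetric} systems $S'_{\underline{\theta},\underline{\mu},\underline{\nu}}$ chosen with $a=c$ (this is exactly what kills the cross term above) and symmetric nonlinearity, inserts the exact solution $(\eta_\Sigma^\varepsilon,\bu_\Sigma^\varepsilon)$ of such a symmetric system as an intermediary, and applies \cite[Corollary 2]{BCL2005} twice --- once to the Boussinesq family $(\eta_B^\varepsilon,\bu_B^\varepsilon)$ and once to the Euler family (whose consistency with the new class comes from Proposition \ref{propcons} and \cite[Corollary 5.31]{Lannes13}) --- concluding by the triangle inequality, $I+II\leq C\varepsilon^2 t$. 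If you want to keep your direct approach, you must either restrict the parameter range (e.g.\ $a=c\leq 0$, $b=d>0$, where your energy does close) or import the change-of-variables symmetrization; in the generality of the theorem, that imported symmetrization is the missing ingredient of your proof.
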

\begin{proof}
 The error estimate (\ref{eq:errestim}) follows using the triangle inequality and the results of \cite[Corollary 2]{BCL2005} for both $(\eta^\varepsilon_E, \bu^\varepsilon_E)$ and $(\eta_B^\varepsilon, \bu_B^\varepsilon)$. 

Indeed, performing symmetrization like in \cite{BCL2005}  (using appropriate nonlinear change of variables) to the new Boussinesq systems we obtain a class of systems with symmetric nonlinearity. This is the class of the so-called symmetric systems (see also \cite{BCL2005}), which is denoted by $S_{\theta,\mu,\nu}^{'}$. Let $(\bu,\eta)$ be a solution consistent with $S_{\theta,\mu,\nu}^{'}$ and $(\underline{\theta},\underline{\mu},\underline{\nu})$ fixed satisfying $a=c$. By applying the nonlinear change of variables, we have that $(\bu,\eta)$ is consistent with the class $S_{\underline{\theta},\underline{\mu},\underline{\nu}}^{'}$. Denote by $\Sigma=S_{\underline{\theta},\underline{\mu},\underline{\nu}}^{'}$ the corresponding subclass of symmetric non-linear and dispersive systems.
Moreover, for all $\varepsilon>0$, let $(\bu_{\Sigma}^\varepsilon,\eta_{\Sigma}^\varepsilon)$ be the exact solution of a system of the class $\Sigma$, then
\begin{align*}
|\bu_B^\varepsilon &- \bu^\varepsilon_E|_{L^{\infty}([0,t]\times H^s)} + |\eta_B^\varepsilon - \eta^\varepsilon_E|_{L^{\infty}([0,t]\times H^s)} = \\
&|\bu_B^\varepsilon - (1-\frac{\varepsilon}{2}(1-\theta^{2})\Delta)^{-1}(1-\frac{\varepsilon}{2}(1-\underline{\theta}^{2})\Delta)(\bu_{\Sigma}^\varepsilon(1-\frac{\varepsilon}{2}\eta_{\Sigma}^\varepsilon)) \\
&+ (1-\frac{\varepsilon}{2}(1-\theta^{2})\Delta)^{-1}(1-\frac{\varepsilon}{2}(1-\underline{\theta}^{2})\Delta)(\bu_{\Sigma}^\varepsilon(1-\frac{\varepsilon}{2}\eta_{\Sigma}^\varepsilon)) - \bu^\varepsilon_E|_{L^{\infty}([0,t]\times H^s)} \\
&+ |\eta_B^\varepsilon - \eta^\varepsilon_{\Sigma} + \eta^\varepsilon_{\Sigma} - \eta^\varepsilon_E|_{L^{\infty}([0,t]\times H^s)} \\
&\leq |\bu_B^\varepsilon - (1-\frac{\varepsilon}{2}(1-\theta^{2})\Delta)^{-1}(1-\frac{\varepsilon}{2}(1-\underline{\theta}^{2})\Delta)(\bu_{\Sigma}^\varepsilon(1-\frac{\varepsilon}{2}\eta_{\Sigma}^\varepsilon))|_{L^{\infty}([0,t]\times H^s)} \\
&+ |\eta_B^\varepsilon - \eta^\varepsilon_{\Sigma}|_{L^{\infty}([0,t]\times H^s)} + |\eta^\varepsilon_{\Sigma} - \eta^\varepsilon_E|_{L^{\infty}([0,t]\times H^s)} \\
&+ |(1-\frac{\varepsilon}{2}(1-\theta^{2})\Delta)^{-1}(1-\frac{\varepsilon}{2}(1-\underline{\theta}^{2})\Delta)(\bu_{\Sigma}^\varepsilon(1-\frac{\varepsilon}{2}\eta_{\Sigma}^\varepsilon)) - \bu^\varepsilon_E|_{L^{\infty}([0,t]\times H^s)} \\
& = I + II\ ,
\end{align*}
where 
$$\begin{aligned}
I=&|\bu_B^\varepsilon - (1-\frac{\varepsilon}{2}(1-\theta^{2})\Delta)^{-1}(1-\frac{\varepsilon}{2}(1-\underline{\theta}^{2})\Delta)(\bu_{\Sigma}^\varepsilon(1-\frac{\varepsilon}{2}\eta_{\Sigma}^\varepsilon))|_{L^{\infty}([0,t]\times H^s)} \\ 
&+|\eta_B^\varepsilon - \eta^\varepsilon_{\Sigma}|_{L^{\infty}([0,t]\times H^s)} \ ,
\end{aligned}$$
and
$$\begin{aligned}
II=&|(1-\frac{\varepsilon}{2}(1-\theta^{2})\Delta)^{-1}(1-\frac{\varepsilon}{2}(1-\underline{\theta}^{2})\Delta)(\bu_{\Sigma}^\varepsilon(1-\frac{\varepsilon}{2}\eta_{\Sigma}^\varepsilon)) - \bu^\varepsilon_E|_{L^{\infty}([0,t]\times H^s)} \\
& + |\eta^\varepsilon_{\Sigma} - \eta^\varepsilon_E|_{L^{\infty}([0,t]\times H^s)}\ .
\end{aligned}$$
Using \cite[Corollary 2]{BCL2005} for the solution of the Boussinesq system $(\bu_B^\varepsilon, \eta_B^\varepsilon)$ for all $\varepsilon>0$, we obtain $I \leq C\;\varepsilon^2t$. Now, 
thanks to the Proposition \ref{propcons} and \cite[Corollary 2]{BCL2005} for the solution of the Euler equations $(\bu_E^\varepsilon, \eta_E^\varepsilon)$, for all $\varepsilon>0$, yields $II \leq C\;\varepsilon^2t $, which implies (\ref{eq:errestim}) and completes the proof.
\end{proof}

\begin{remark} We have established that the new Boussinesq systems are asymptotically equivalent. However, Boussinesq systems have different well-posedness requirements for different sets of the parameters $a,b,c,d$. In the previous theorem we consider $p$ large enough so as to embrace all valid Boussinesq systems. The maximal time of existence also varies depending on the particular system but all systems follow the general results of \cite{DMS2007,saut2012cauchy}. The situation is much different when bounded domains are considered. The results then depend on the choice of the system as well as on the boundary conditions. A discussion on justified slip-wall boundary conditions for Boussinesq systems is presented in Section \ref{sec:bcs}.  Furthermore, the shape of the solitary waves and their dynamics such as interactions and collisions depend also on the choice of the parameters $a,b,c,d$.
\end{remark}

\section{Energy conservation and variational derivation}\label{sec:energy}

In this section we consider $\Omega=\mathbb{R}^2$ while the derivations hold true in an bounded domain as well, under the hypothesis that the solutions along with their derivatives vanish appropriately on the boundary. For example, the boundary conditions
$$\tilde{\bu}\cdot\bn=0,\quad a \tilde{\nabla}(\tilde{D}^3\tDiv\tilde{\bu})\cdot\bn=0,\quad b \tilde{\nabla}~\tilde{\eta}\cdot\bn=0\quad \text{ on }\ \partial\Omega\ ,$$
are adequate to carry the following analysis. Once again we consider stationary bottom topography as there is no meaning to talk about energy conservation with a moving bottom.

Consider the system (\ref{eq:eul28})--(\ref{eq:eul29}) with stationary bottom $\tilde{D}$. Setting 
$$\tilde{R}=(\tilde{D}+\varepsilon\tilde{\eta})\tilde{\bu}+\sigma^2\left\{a\nabla(\tilde{D}^3\tDiv\tilde{\bu})-b\tilde{D}^2\tilde{\nabla}{\tilde\eta}_{\tilde{t}}\right\}\ ,$$
and 
$$\tilde{Q}=\tilde{\eta}+\varepsilon\tfrac{1}{2}|\tilde{\bu}|^2 +\sigma^2\left\{c\tDiv(\tilde{D}^2\tilde{\nabla}\tilde{\eta})-d\tDiv(\tilde{D}^2{\tilde{\bu}}_{\tilde{t}})\right\}\ ,$$
and using the boundary conditions we have
$$
0 =\int_\Omega {\tilde{\eta}}_{\tilde{t}}\tilde{Q}+{\tilde{\bu}}_{\tilde{t}}\cdot \tilde{R}+\tDiv \tilde{R}\tilde{Q}+\tilde{R}\cdot \tilde{\nabla} \tilde{Q} = \int_{\Omega} {\tilde{\eta}}_{\tilde{t}}\tilde{Q}+{\tilde{\bu}}_{\tilde{t}}\cdot \tilde{R}\ .
$$
Moreover, if $b=d$ we have
$$
\begin{aligned}
0 &= \int_{\Omega} {\tilde{\eta}}_{\tilde{t}}\tilde{Q}+{\tilde{\bu}}_{\tilde{t}}\cdot \tilde{R}\\
&=\int_{\Omega} {\tilde{\eta}}_{\tilde{t}}\tilde{\eta}+\varepsilon\tfrac{1}{2}{\tilde{\eta}}_{\tilde{t}}|\bu|^2-\sigma^2c\tilde{D}^2\tilde{\nabla}{\tilde{\eta}}_{\tilde{t}}\cdot \tilde{\nabla}\tilde{\eta}+\tilde{D}\tilde{\bu}\cdot{\tilde{\bu}}_{\tilde{t}}+\varepsilon\tilde{\eta} \tilde{\bu}\cdot{\tilde{\bu}}_{\tilde{t}}-\sigma^2a\tilde{D}^3 \tDiv\tilde{\bu}\tDiv {\tilde{\bu}}_{\tilde{t}}\\
&=\frac{1}{2}\frac{d}{d\tilde{t}}\int_\Omega \tilde{\eta}^2+(\tilde{D}+\varepsilon\tilde{\eta})|\tilde{\bu}|^2-\sigma^2\left[c\tilde{D}^2|\tilde{\nabla}\tilde{\eta}|^2+a\tilde{D}^3(\tDiv\tilde{\bu})^2 \right]\ .
\end{aligned}
$$
Therefore, for $a\leq 0$, $c\leq 0$ and $b=d\geq 0$ we define the energy functional
\begin{equation}\label{eq:energy}
E(t)=\frac{1}{2}\int_\Omega \tilde{\eta}^2+(\tilde{D}+\varepsilon\tilde{\eta})|\tilde{\bu}|^2-\sigma^2\left[c\tilde{D}^2|\tilde{\nabla}\tilde{\eta}|^2+a\tilde{D}^3(\tDiv\tilde{\bu})^2 \right]\ ,
\end{equation}
which in dimensional variables can be written as
\begin{equation}\label{eq:energyd}
E(t)=\frac{1}{2}\int_\Omega g\eta^2+(D+\eta)|\bu|^2-\left[cgD^2|\nabla\eta|^2+aD^3(\Div\bu)^2 \right]\ .
\end{equation}
Using the particular energy function, we can derive energy conservative systems of the class (\ref{eq:Nwogunabcdd2}) using the variational method of \cite{CD2012}. We define the kinetic energy as
$$\mathcal{K}=\frac{\rho}{2}\int_{t_1}^{t_2}\int_\Omega (D+\eta)|\bu|^2-aD^3(\Div\bu)^2~d\bx~dt\ ,$$
and the potential energy as
$$\mathcal{P}=\frac{\rho}{2}\int_{t_1}^{t_2}\int_\Omega g[\eta^2-cD^2|\nabla \eta|^2]~d\bx~dt\ .$$
Then we define the action integral
$$\mathcal{I}=\mathcal{K}-\mathcal{P}+\rho\int_{t_1}^{t_2}\int_\Omega [\eta_t+\Div[(D+\eta)\bu]+\Div\left\{a\nabla(D^3\Div\bu)-bD^2\nabla\eta_t\right\}]\phi~d\bx~dt\ ,$$
where the mass conservation is imposed with the help of the Lagrange multiplier $\phi(\bx,t)$. Then the Euler-Lagrange equations for the action integral $\mathcal{I}$ are the following:
\begin{align}
\delta \phi~: & \quad \eta_t+\Div[(D+\eta)\bu]+\Div\left\{a\nabla(D^3\Div\bu)-bD^2\nabla\eta_t\right\}=0\ , \label{eq:var1}\\
\delta \bu~: & \quad \bu-\nabla\phi=0\ , \label{eq:var2}\\
\delta \eta~: & \quad \tfrac{1}{2}|\bu|^2-cg\Div(D^2\nabla\eta)-g\eta-\phi_t+b\nabla\cdot(D^2\nabla\phi_t)-\bu\cdot\nabla\phi=0\ . \label{eq:var3}
\end{align}
Eliminating $\phi$ in (\ref{eq:var3}) using (\ref{eq:var2}) we obtain the momentum conservation equation
$$\bu_t+g\nabla\eta+\tfrac{1}{2}\nabla|\bu|^2 +\nabla\left\{cg\Div(D^2\nabla\eta)-d\Div(D^2\bu_t)\right\}=0\ ,$$
with $b=d$. In this way we re-derived the subclass of systems (\ref{eq:Nwogunabcdd2}) that preserve a reasonable form of energy, and these must have necessarily $b=d$. From (\ref{eq:var2}) we observe that the new systems imply potential flow, and the conservation of the $\Curl\bu$ follows.

\section{Valid slip-wall boundary conditions}\label{sec:bcs}

Practical problems involving water waves are usually posed in bounded polygonal domains $\Omega$. For example, waves in a port or waves interacting with the sides of a basin. Wall boundary conditions must then be imposed alongside the model equations. When the waves can slip on the wall without any friction, a slip-wall boundary condition of the form $\bu\cdot\bn=0$ needs to be imposed. Here, $\bn$ is the outward unit normal vector on the boundary $\partial\Omega$. While no other boundary conditions are usually necessary, it is interesting to observe that the Euler equations implicitly obey additional boundary conditions. For example, the Neumann condition $\nabla\eta\cdot\bn=0$ holds true on $\partial\Omega$ when $\Omega$ is a polygonal domain, \cite{Khakimzyanov2018a}. To derive this particular boundary condition, first observe that $\nabla p\cdot\bn=0$ on $\partial\Omega$. This can be obtained by taking the inner product of equation (\ref{eq:momentv}) with $\bn$ on $\partial\Omega$. Writing the dynamic boundary condition (\ref{eq:pressbc}) as
$p(\bx,\eta(\bx,t),t)=p_{\text{atm}}$,
and taking the horizontal gradient operator, we obtain
$\nabla p+p_z\nabla \eta={\bf 0}$.
After multiplying by the normal vector $\bn$ and using the fact that $\nabla p\cdot\bn=0$ on $\partial\Omega$, we obtain the boundary condition $\nabla \eta\cdot \bn=0$ on $\partial\Omega$. If the domain is a general domain where well-posedness can be established, appropriate polygonal approximations $\Omega_n\subset\Omega$ can be constructed such that $\lim\Omega_n=\Omega$, and the particular Neumann condition can be used in more general situations. This particular Neumann condition is required for the well-posed Boussinesq systems of Bona-Smith and BBM-BBM type in bounded domains with slip-wall boundary conditions, and based on the previous analysis, it appears to be a physical condition \cite{KMS2020,IKKM2021}.  
Note that for the original Nwogu system (\ref{eq:Nwogu}), in addition to the slip-wall boundary conditions $\bu\cdot\bn=\nabla\eta\cdot\bn=0$ on $\partial\Omega$, it is required that $[\tilde{a}D^2\nabla(\Div (D\bu))+\tilde{b}D^3\nabla(\Div \bu)]\cdot\bn=0$ must be satisfied on the boundary $\partial\Omega$, \cite{WB2002}. This boundary condition is satisfied by the solutions of the Euler equations only when the bottom is flat, while for general bottoms, we obtain a simpler condition. 

From the irrotationality condition (\ref{eq:irrotational}), we have $\bu_z=\nabla w$, and thus $\nabla w\cdot \bn=0$ on $\partial \Omega$. Moreover, differentiating the mass equation (\ref{eq:mass}), we obtain $\nabla ( \Div \bu) =-\nabla w_z$, which implies the boundary condition $\nabla (\Div \bu )\cdot \bn =0$ on $\partial \Omega$. Similar conditions hold true for BBM-BBM-type systems. For example, the slip-wall conditions imply that $\nabla(\Div(D^2\bu_t))\cdot\bn=0$ on $\partial\Omega$. If the solution satisfies $\nabla(\Div D^2 \bu)\cdot\bn=0$ on $\partial \Omega$, then this is satisfied for $t\geq 0$. Due to the assumption of mild bottom variations in the derivation of the particular BBM-BBM system and also for the new Nwogu system, we have that $\nabla(\Div \bu)\cdot \bn\approx 0$, which means that even if the boundary condition is not satisfied exactly, it is a reasonable approximation of the exact boundary condition. The same will be true for Nwogu-type systems of the form (\ref{eq:Nwogunabcdd2}). This shows that some of the new $abcd$-systems (at least the BBM-BBM, Bona-Smith, Nwogu, and regularized Nwogu systems) can be used in practical situations in bounded domains with more accurate boundary conditions compared to the other Boussinesq systems. It is worth mentioning that of all the known Boussinesq systems, only the BBM-BBM system has been proved to be well-posed in bounded domains with slip-wall boundary conditions \cite{IKKM2021}, while the well-posedness in unbounded domains follows the work of \cite{DMS2007} in a similar manner. The Bona-Smith systems can also be proved to be well-posed in bounded domains, but this will be discussed in a separate work.

\section{Experimental validation}\label{sec:valid}

The mild slope assumption used in the derivation of the new $abcd$-Boussinesq systems with variable bottom topography raises a validity question: What are the minimum restrictions on the bottom variations to ensure the accuracy of the new systems? This question requires a more detailed study than the one presented here. However, in this section, we demonstrate that the particular assumption is not a significant barrier. On the contrary, the simplicity of the equations and their accuracy make them highly attractive for the mathematical modeling of long waves.

Among the systems we study in this section is the new Nwogu system. Although this system does not preserve any form of energy, it has optimal linear dispersion relation for $\theta^2=1/5$, $\mu=1$ and $\nu=0$ and outperforms other systems in experiments where the bottom variations as well as the steepness of the waves stress the validity of Boussinesq systems. 
For the sake of convenience we rewrite the new Nwogu system in the form
\begin{equation}\label{eq:Nwogunew}
\begin{aligned}
& \eta_t+\Div[(D+\eta)\bu]-\frac{1}{15}\Delta(D^3\Div\bu)=\frac{\sqrt{5}-2}{2\sqrt{5}}\Div(D^2\nabla\zeta_t)-\zeta_t\ ,\\
& \bu_t+g\nabla\eta+\tfrac{1}{2}\nabla|\bu|^2 -\frac{2}{5}\nabla\Div(D^2\bu_t)=\frac{\sqrt{5}-1}{\sqrt{5}}D\nabla\zeta_{tt}\ .
\end{aligned}
\end{equation}
 We also consider the new BBM-BBM system ($\theta^2=2/3$, $\nu=\mu=0$)
 \begin{equation}\label{eq:newbbmbbm}
\begin{aligned}
& \eta_t+\Div[(D+\eta)\bu]-\frac{1}{6}\Div (D^2\nabla\eta_t)=\frac{ 2-\sqrt{6}}{3}\Div(D^2\nabla\zeta_t)-\zeta_t\ ,\\
& \bu_t+g\nabla\eta+\tfrac{1}{2}\nabla|\bu|^2 -\frac{1}{6}\nabla\Div(D^2\bu_t)=\frac{\sqrt{3}-\sqrt{2}}{\sqrt{3}}D\nabla\zeta_{tt}\ ,
\end{aligned}
\end{equation}
which preserves the energy functional (\ref{eq:energyd}) with $a=c=0$. 
We also study the new Peregrine system ($\theta^2=1/3$, $\nu=\mu=0$)
\begin{equation}\label{eq:newperegrine}
\begin{aligned}
& \eta_t+\Div[(D+\eta)\bu]=\frac{\sqrt{3}-2}{2\sqrt{3}}\Div(D^2\nabla\zeta_t)-\zeta_t\ ,\\
& \bu_t+g\nabla\eta+\tfrac{1}{2}\nabla|\bu|^2 -\frac{1}{3}\nabla\Div(D^2\bu_t)=\frac{\sqrt{3}-1}{\sqrt{3}}D\nabla\zeta_{tt}\ .
\end{aligned}
\end{equation}
All the experiments primarily examine the impacts of bottom topography on long one-dimensional waves, which serve as standard benchmarks in the literature of Boussinesq systems. However, it should be noted that while the selected experiments somewhat deviate from the scope of Boussinesq systems, in the sense that the waves are of larger amplitude compared with what Boussinesq systems have been derived to describe, the results appear to be remarkably promising. Future work could focus on conducting more advanced experiments to further enhance our understanding.

For all the numerical simulations, we utilized previously developed Galerkin/Finite Element methods. For the Nwogu system, we employed the numerical method proposed in \cite{mm2023}. The method proposed by \cite{ADM2010} was utilized for the BBM-BBM system. Lastly, for the Peregrine system with variable bottom topography, we used the method introduced in \cite{AD2012}, which was extended accordingly. In all experiments, we employed cubic spline elements and the classical explicit fourth-order Runge-Kutta method with four stages.

\subsection{Shoaling and reflection of solitary waves}

In the first experiment, we investigate the capability of the new models to depict the shoaling behavior of waves on gentle slopes, as outlined in the study  \cite{Dodd1998}. Specifically, we focus on a one-dimensional computational domain $\Omega=[-100,20]$ with a bottom topography characterized by the following function: 
$$D(x)=\left\{ \begin{array}{ll}
0.7, & x\leq 0,\\
0.7-x/50, &x>0 .
\end{array}
\right.
$$
For the numerical simulation of this experiment, we employ wall boundary conditions at the endpoints $x=-100$ and $x=20$. All variables used in the simulation are expressed in SI units.

We begin by investigating the reflection of two right-traveling solitary waves with amplitudes $A_1=0.07~m$ and $A_2=0.12~m$. e two solitary waves have been horizontally shifted so that their maximum at $t=0$ is reached at $x_0=-30$. These waves initially propagate over the sloping bottom and then collide with the wall at $x=-20$, where they are reflected and propagate to the left. We record the free surface elevation at three wave gauges positioned at $x=0$, $x=16.25$, and $x=17.75$.

\begin{figure}[ht!]
  \centering
\includegraphics[width=0.75\columnwidth]{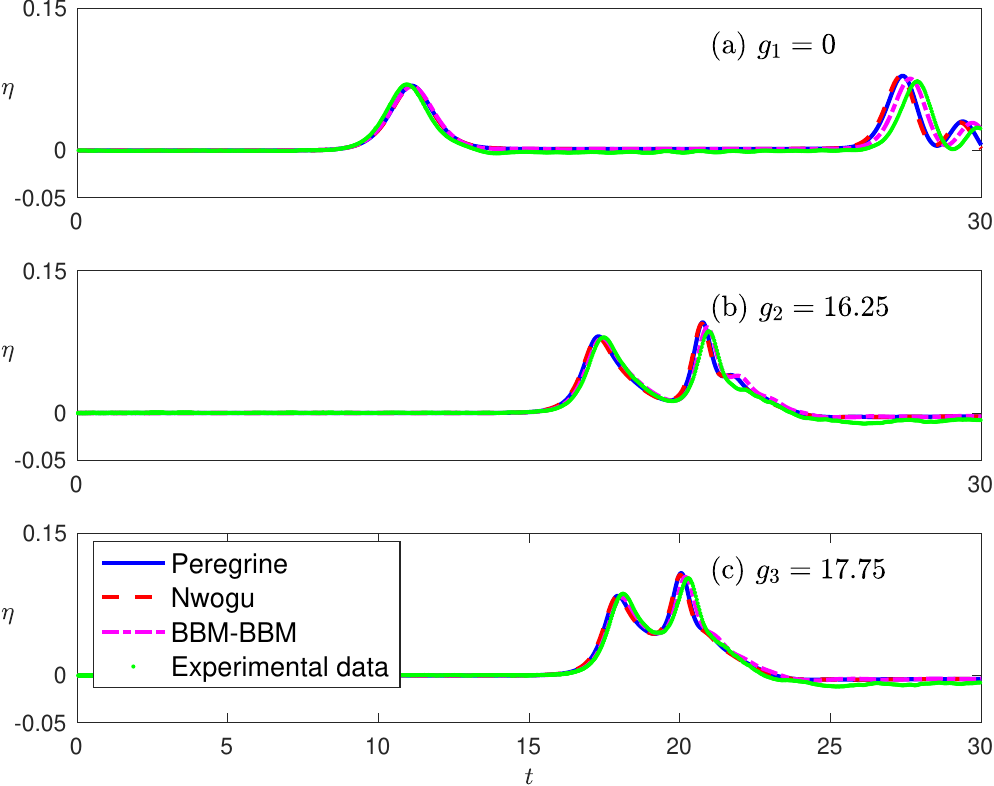}
  \caption{Reflection of shoaling solitary wave of amplitude $A=0.07~m$ by a vertical wall}
  \label{fig:shoal1}
\end{figure}
 \begin{figure}[ht!]
  \centering  \includegraphics[width=0.75\columnwidth]{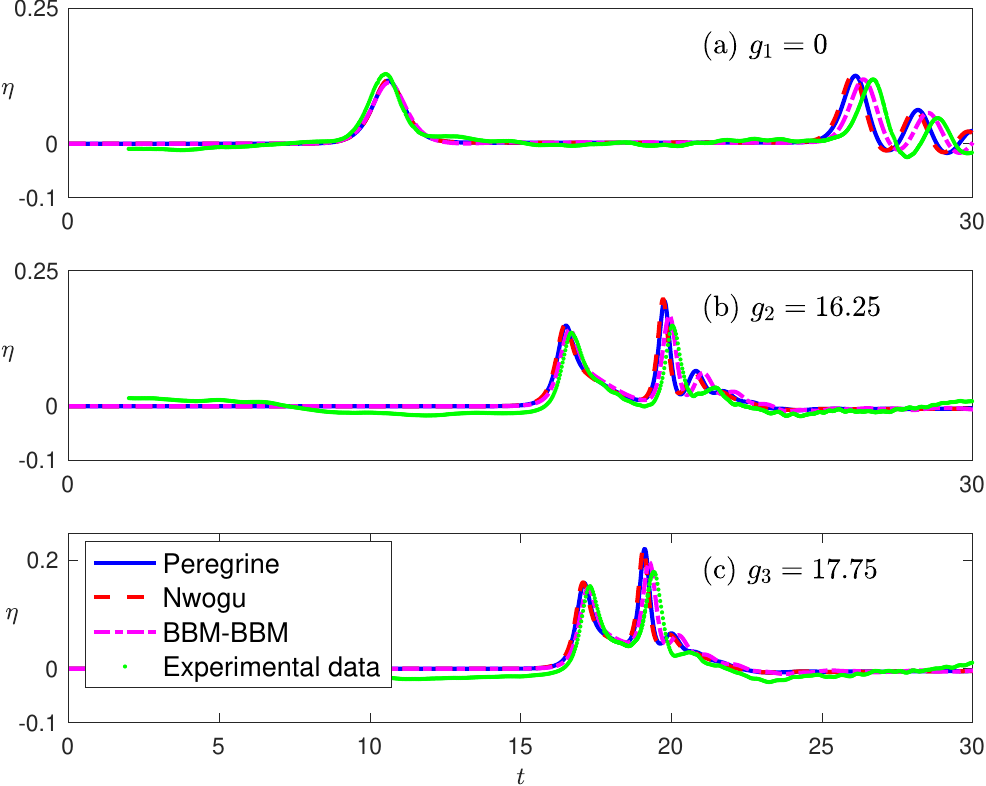}
  \caption{Reflection of shoaling solitary wave of amplitude $A=0.12~m$ by a vertical wall}
  \label{fig:shoal2}
\end{figure}

The results of these experiments for the solitary wave with $A_1=0.07~m$ are presented in Figure \ref{fig:shoal1}, while the results corresponding to the other solitary wave are shown in Figure \ref{fig:shoal2}. We observe that the solutions of the BBM-BBM system exhibit better agreement with the experimental data compared to the other systems. This behavior was also observed in \cite{KMS2020} with a BBM-BBM system that does not preserve any energy functional. In both cases, the values of $\varepsilon$ and $\sigma$ are small but not exactly within the range of validity of Boussinesq systems. For the first case, the characteristic values of the initial solitary wave are $\varepsilon\approx 0.23$ and $\sigma\approx 0.035$. For the second case, the parameters are $\varepsilon\approx 0.4$ and $\sigma\approx 0.05$. Moreover, the bottom variations parameter is $\beta=0.028$. It is important to note that these values change during the transformation of the shoaling solitary wave. Particularly, the value of $\varepsilon$ increases as the wave passes over the slope. This leads to a significant amount of error, which explains the deviations between the numerical and experimental results in Figures \ref{fig:shoal1} and \ref{fig:shoal2}.

We conclude the study of shoaling waves by examining an experiment conducted by \cite{GSSV1994}. In this particular experiment, our focus is on the shoaling behavior of a solitary wave with an amplitude of $A=0.2$ propagating over a bottom topography characterized by the following function:
$$D(x)=\left\{ \begin{array}{ll}
1, & x\leq 0,\\
1-x/35, &x>0 .
\end{array}
\right.
$$
We recorded the free surface elevation at five wave gauges that are located at $x=20.96$, $22.55$, $23.68$, $24.68$, $25.91$, respectively. It is important to note that in this experiment, all variables are scaled. This experiment proves to be more demanding compared to the previous one, requiring appropriate calibration of the location of the initial data in order to align the numerical solutions with the experimental data. This calibration is necessary because solitary waves of the same amplitude travel at different speeds for different Boussinesq systems \cite{DM2008}. As a result, solutions from different systems reach the wave gauges at different times.

To demonstrate this, let's examine the shoaling wave at the gauge located at $x=25.91$, which has an approximate amplitude of $a_0\approx 0.5$, while the depth is $D(x)\approx 0.26$, resulting in $\varepsilon\approx 1.92$. Although this value of $\varepsilon$ exceeds the range of validity for any Boussinesq system, the obtained results are still quite satisfactory after performing the appropriate calibration of the initial condition. The characteristic values for the initial solitary wave and the bottom in this case are $\varepsilon=0.2$, $\sigma\approx 0.05$, and $\beta\approx 0.028$.

Surprisingly, the BBM-BBM system required the least amount of phase calibration and exhibited the most accurate amplification of the solution among the tested systems. The results are illustrated in Figure \ref{fig:shoal3}. Once again, the BBM-BBM system demonstrates superior approximation of shoaling waves compared to the other systems. Similar observations were reported for the classical Nwogu system in \cite{FBCR2015}, emphasizing the need for initial condition calibration in all Boussinesq systems (except for the Serre-Green-Naghdi equations \cite{MSM2017}).

In the comparison between the classical Peregrine system (\ref{eq:Peregrin}) and the new Peregrine system, we found that we needed to apply the same calibration to the initial conditions for both systems in order to align the phase of the numerical solutions with the experimental data. However, when we utilized the Serre-Green-Naghdi system for the same experiment, there was no need for calibration of the initial data's location \cite{MSM2017}. This implies that even in extreme situations involving Boussinesq systems, the assumption of small bottom variations is not a disadvantage, as any Boussinesq system will inevitably lose accuracy in such extreme experiments. It is important to note that the phase error observed in the first experiment was negligible, and therefore, no calibration was necessary.
 \begin{figure}[ht!]
  \centering  \includegraphics[width=0.75\columnwidth]{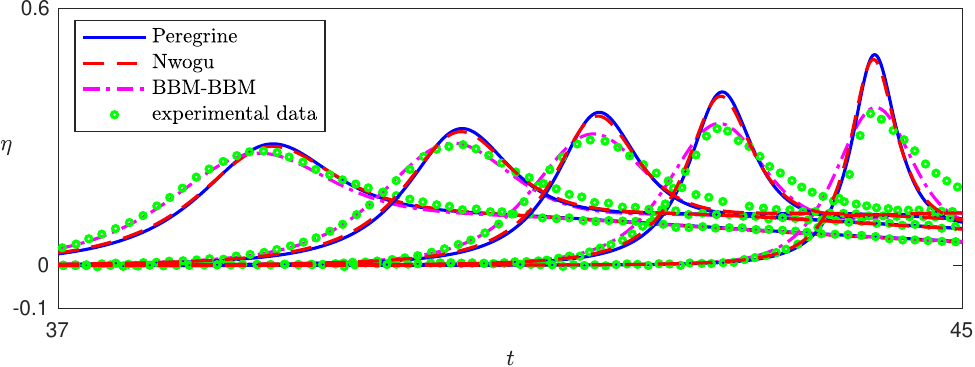}
  \caption{Recorded wave gauges of  a shoaling solitary wave of amplitude $A=0.2$ on a plain beach of slope $1~:~35$}
  \label{fig:shoal3}
\end{figure}

Please note that the numerical method used to solve the BBM-BBM system in our study is not conservative. However, we were able to measure the total energy of the solution (\ref{eq:energyd}), which was approximately $0.162$ for the first experiment, $0.362$ for the second experiment, and $0.135$ for the third experiment.

\subsection{Periodic waves over a submerged bar} 

In this scenario, we assess the performance of the aforementioned Boussinesq systems in a highly challenging laboratory experiment designed to investigate the nonlinear and dispersive properties of Boussinesq systems with variable bathymetry \cite{BB1994}. This specific experiment has been extensively used to validate different Boussinesq systems, including the original Nwogu system, and is regarded as a standard benchmark for numerical models \cite{Dingemans1994, KDNS12, WB1999}.

In this experiment, small-amplitude periodic waves are generated by a wavemaker and propagate over a bathymetry defined by the function:
$$
D(x,y)=\left\{
\begin{array}{cc}
-0.05x+0.7, & x\in[6,12) \\
0.1, & x\in[12,14) \\
0.1x-1.3, & x\in[14,17] \\
0.4, & \mbox{elsewhere}
\end{array}
 \right.
 $$
which is depicted in Figure \ref{fig:bottom}. 
 \begin{figure}[ht!]
  \centering
  \includegraphics[width=0.9\columnwidth]{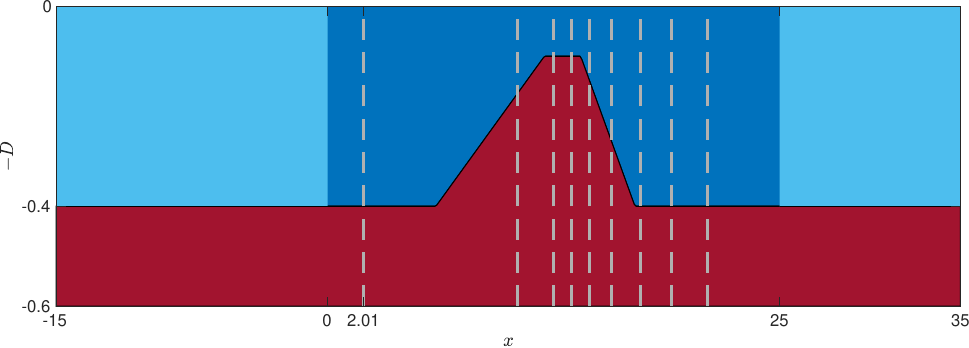}
  \caption{The bathymetry in relation to the sponge layers (light blue regions), wavemaker and wave gauges locations (distances in metres)}
  \label{fig:bottom}
\end{figure}

 \begin{figure}[ht!]
  \centering
  \includegraphics[width=\columnwidth]{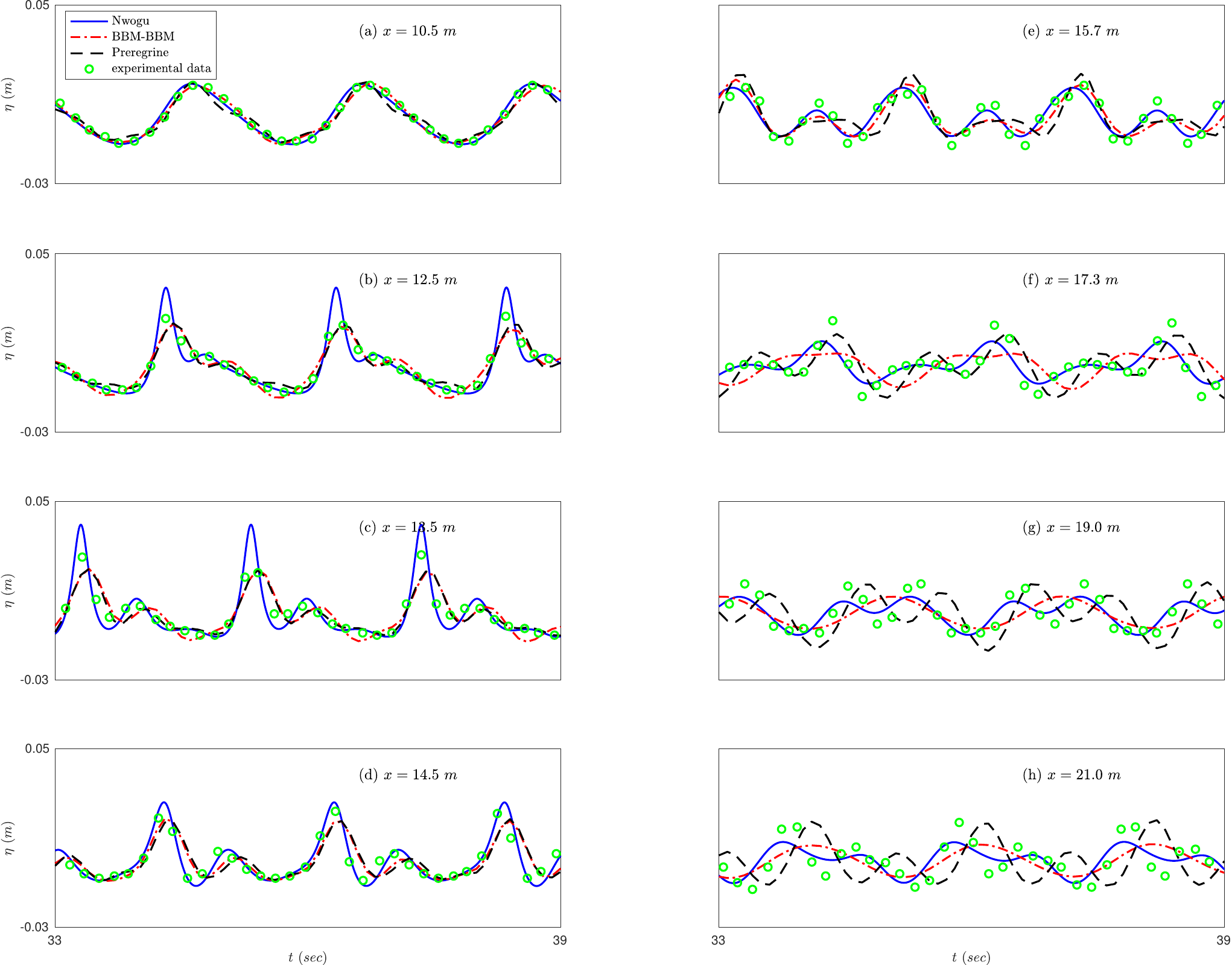}
  \caption{The recorded data at the various wave gauges and the numerical solutions for the new Nwogu, BBM-BBM and Peregrine systems}
  \label{fig:results}
\end{figure}

During the interaction of the generated waves with the submerged bar's uphill portion, they undergo shoaling and steepening. Subsequently, the waves propagate at an increased speed over the downhill part of the bar, encompassing a broad spectrum of long wave numbers. In this experiment, the input wave has a period of $\tau=2.02s$ and an amplitude of $A=0.01m$. The water's free surface is recorded at several wave gauges positioned at $x=10.5$, $12.5$, $13.5$, $14.5$, $15.7$, $17.3$, $19.0$, and $21.0$. For numerical simulations, the waves are generated using a wavemaker located at $x=2.01$, following the methodology described in \cite{KMS2020}.

The locations of the wavemaker and the wave gauges are depicted by vertical broken lines in Figure \ref{fig:bottom}. Experimental measurements of the free surface elevation at the different gauges are presented in Figure \ref{fig:results}. In the same figure, numerical results obtained using the standard Galerkin/Finite Element method with cubic splines, combined with the classical fourth-order Runge-Kutta method, are displayed for solving the one-dimensional initial-periodic boundary-value problem for the systems (\ref{eq:Nwogunew}), (\ref{eq:Nwogunabcdd2}) corresponding to the new Nwogu system, BBM-BBM system, and Peregrine system. Further details regarding the specific numerical method can be found in \cite{ADM2010ii} and \cite{mm2023}. It is noteworthy that the numerical solution exhibits an impressive level of agreement with the experimental data.

When comparing the results obtained using the new BBM-BBM, Peregrine and Nwogu systems, employing similar numerical methods and wave generation procedures as described in \cite{KMS2020}, we observe that the performance of the new Nwogu system is superior due to its optimal dispersion properties. Typically, numerical solutions of other Boussinesq systems, including Peregrine's system, start deviating from the experimental data after wave gauge (f). In contrast, the phase of the numerical solution for the new Nwogu system agrees with the experimental data across all wave gauges. Other properties, such as the shoaling of solitary waves, have been tested for the new BBM-BBM and Peregrine systems in \cite{IKKM2021}. For this specific experiment, we have estimated that $\varepsilon\approx 0.5$ and $\sigma\approx 0.1$. These values indicate that the experiment is situated outside the theoretical range of validity for Boussinesq systems.

In conclusion, the simplifications made by assuming mild bottom variations do not compromise the validity of Boussinesq systems, even in highly demanding experiments involving nearly breaking waves, as demonstrated in this section.

\section{Conclusions}
The class of $abcd$-Boussinesq systems for water wave theory, introduced in \cite{BCS2002} and \cite{BCL2005}, is generalized to a new set of $abcd$-Boussinesq systems (\ref{eq:Nwogunabcdd2})--(\ref{eq:abcdcoef2}) in the case of variable bottom topography and in two-dimensional domains. These systems have been formulated to be applicable in simulations with slip-wall boundary conditions, considering practical applications. We also demonstrate that the new systems can accurately accommodate slip-wall boundary conditions, which was not straightforward with their classical counterparts. A specific subclass of the new systems preserves a meaningful energy functional, making them suitable for conservative numerical simulations. The consistency of the new systems with the Euler equations has also been studied, and the errors between the Euler equations and the new Boussinesq systems are estimated, extending the results of \cite{Lannes13,BCL2005}. Lastly, we show that the assumption of smooth bathymetric variations is not restrictive when it comes to classical benchmarks. These findings align with the conclusions of the work in \cite{MS1991}.


\end{document}